         \let\leq=\leqslant
         \let\geq=\geqslant
\providecommand\boldsymbol[1]{\mbox{\boldmath $##1$}}}
\providecommand\bnabla{\boldsymbol{\bnabla}}
\providecommand\bcdot{\boldsymbol{\cdot}}
\newtheorem{thm}{Theorem}
\newtheorem{prop}{Proposition}
\newtheorem{lem}{Lemma}
\title[Geometric decomposition of the conformation tensor]
{Geometric decomposition of the conformation tensor in viscoelastic turbulence}
\author[Hameduddin, Meneveau, Zaki \& Gayme]%
{Ismail Hameduddin\aff{1,2} %
  \corresp{\email{ismailh@jhu.edu}}
  ,\ns Charles Meneveau\aff{1},\ns
Tamer  A. Zaki\aff{1} 
and Dennice  F. Gayme\aff{1,2} }
\affiliation{
\aff{1}Department of Mechanical Engineering, The Johns Hopkins University,
Baltimore, MD 21218, USA
\aff{2} Kavli Institute for Theoretical Physics, University of California, Santa Barbara, CA 93106, USA}
\newcommand{\sym}{{\textrm{sym}}}
\newcommand{\tr}{{\textrm{tr}}}
\newcommand{\Pos}{{\textbf{Pos}}}
\newcommand{\Sym}{{\textbf{Sym}}}
\newcommand{\GL}{{\textbf{GL}}}
\newcommand{\SO}{{\textbf{SO}}}
\newcommand{\dd}[2]{{ \frac{\text{D} {#1}}{\text{D} {#2}}   }} 
\newcommand{\I}[1]{\textrm{I}_{#1}}
\newcommand{\II}[1]{\textrm{II}_{#1}}
\newcommand{\III}[1]{\textrm{III}_{#1}}
\newcommand{\glaction}[2]{ \left[ #2 \right]_{ #1 } }
\newcommand{\bld}[1]{{\boldsymbol{ #1 }} } 
\DeclareMathAlphabet{\mathscrbf}{OMS}{mdugm}{b}{n} 
\newcommand\Wie{\mbox{\textit{Wi}}}  
\newcommand{\e}[1]{{\text{e}^{ #1 }} } 
\def\sldsh{\rule[0.2\baselineskip]{0.075in}{0.65pt}}
\def\sldot{\rule[0.2\baselineskip]{0.025in}{0.65pt}}
\def\bldot{\hspace{0.025in}}
\def\linesolids{\rule[0.2\baselineskip]{0.275in}{0.65pt}}
\def\linedashed{\sldsh\bldot\sldsh\bldot\sldsh} 
\def\linedotted{\sldot\bldot\sldot\bldot\sldot\bldot\sldot\bldot\sldot\bldot\sldot}
\def\sldshthick{\rule[0.2\baselineskip]{0.075in}{1.5pt}}
\def\linedashedthick{\sldshthick\bldot\sldshthick\bldot\sldshthick}
\def\linesolidsSym{\rule[0.2\baselineskip]{0.1375in}{0.65pt}\parbox{0.0825in}{$\vspace{-0.0075in}\mathlarger{\mathlarger{\mathlarger{\ast}}}$}\rule[0.2\baselineskip]{0.1375in}{0.65pt}}
\def\linedashedSym{\sldsh\bldot\sldsh\parbox{0.0825in}{{$\vspace{-0.0075in}\,\Box\,$}} \sldsh\bldot\sldsh} 
\def\linedshdotSym{\sldsh\bldot\sldot\bldot\parbox{0.0825in}{{$\vspace{-0.0075in}\mathlarger{\mathlarger{\mathlarger{\circ}}}$}}\bldot\sldsh\bldot\sldot} 
\begin{document}
  \maketitle 
 \begin{abstract}  
This work introduces a mathematical approach to analysing the polymer dynamics in turbulent viscoelastic flows that uses a new geometric decomposition of the conformation tensor, along with associated scalar measures of the polymer fluctuations.
The approach circumvents an inherent difficulty in traditional Reynolds decompositions of the conformation tensor: the fluctuating tensor fields are not positive-definite and so do not retain the physical meaning of the tensor.
The geometric decomposition of the conformation tensor yields both mean and fluctuating tensor fields that are positive-definite.
The fluctuating tensor in the present decomposition has a clear physical interpretation as a polymer deformation relative to the mean configuration. 
Scalar measures of this fluctuating conformation tensor are developed based on the non-Euclidean geometry of the set of positive-definite tensors. 
Drag-reduced viscoelastic turbulent channel flow is then used an example case study. The conformation tensor field, obtained using direct numerical simulations, is analysed using the proposed framework.
\end{abstract} 

\section{Introduction}

In the present study, we address the following questions that arise in the context of viscoelastic turbulent flows:
(a) given a turbulent flow whose dynamics are partially governed by state variables that are
positive-definite tensors representing material deformation, 
what is an appropriate method to decompose the flow into a mean, or
nominal, component and a deviation about that mean that preserves the
physical character of the state variables? and (b) are there corresponding
scalar measures of the turbulence associated with these positive-definite
state variables?
The conformation tensor is the relevant positive-definite state variable in viscoelastic turbulence.

Dilute polymer solutions, viscoelastic flows obtained by adding small amounts of polymers to an incompressible Newtonian solvent, are the focus of the present work.
The added polymers impart elasticity to the solvent which then causes the fluid to react not only to the deformation rate but also to the deformation history.
As a result, a complete physical description of viscoelastic turbulence requires characterization of both the velocity, $\bld{u}$, and the conformation tensor, $\mathsfbi{C}$, which together form the state variables.
The conformation tensor is a second-order positive-definite tensor that encapsulates the polymer deformation history and is obtained by averaging, over molecular realizations, the dyad formed by the polymer end-to-end vector \citep{Bird1987}.

The conformation tensor affects the velocity field through the polymer stress, $\mathsfbi{T} = \mathsfbi{T}(\mathsfbi{C})$, while gradients in the velocity field are responsible for polymer stretching.    
Characterizing the mean polymer stress, or stress deficit, was the focus of early work because it was found to be necessary for closing the mean momentum balance \citep{Willmarth1987}.  
A non-vanishing stress deficit suggests the possibility of maintaining a turbulent velocity profile in the absence of Reynolds stresses, in which case the polymer dynamics would sustain turbulence.
The experiments of \citet{Warholic1999} showed that a turbulent mean profile can, indeed, be maintained in the near absence of Reynolds stresses in channel flow. However, the polymer deformation itself is not readily accessible experimentally.
Therefore, much of the work in understanding the mechanisms that lead to behaviour such as that found by \citet{Warholic1999} has resorted to analytical treatments or direct numerical simulations (DNS), which we will briefly review below.

\subsection{Previous approaches to quantifying polymer deformation and its effects}

The main approach to analyse the polymer dynamics has been to utilize the statistics of the polymer forces and torques, or the normal stresses. The polymer force is the divergence of the polymer stress and the polymer torque is the curl of the polymer force.
For example,  \citet{DeAngelis2002} and \citet{Dubief2005} showed that cross-stream polymer force in turbulent channel flow counteracts spanwise variations in the velocity while enhancing streamwise advection, consistent with drag-reducing behaviour. 
The polymer torque acts in lockstep with the polymer force and counteracts streamwise vortices. It also inhibits generation of the heads of hairpin vortices \citep{Kim2007, Kim2013}. 
Recent theoretical work proposed a vorticity--polymer torque formulation of the linearised governing equations and used it to reveal a reverse Orr mechanism for turbulence production in viscoelastic parallel shear flows \citep{Page2014,Page2015}.
\citet{Min2003,Min2003a}, on the other hand, studied viscoelastic turbulent channel flow  but used the elastic energy, defined there as proportional to the sum of the normal polymer stresses, to posit a theory of drag reduction that relied on an active exchange of elastic and kinetic energies in the flow.

A more appropriate quantity to probe the polymer deformation itself, and one that is also a state variable, is the conformation tensor $\mathsfbi{C}$. The trace of $\mathsfbi{C}$, denoted here as $\tr\,\mathsfbi{C}$, is commonly used in the literature to analyse $\mathsfbi{C}$ since it is equal to the sum of its principal stretches and is therefore a measure of the polymer deformation. For example, \citet{Sureshkumar1997} considered first-order statistics of $\tr\,\mathsfbi{C}$ in their pioneering paper on the DNS of viscoelastic turbulent channel flows.
The quantity $\tr\,\mathsfbi{C}$ is frequently used because it is proportional to the elastic energy in purely Hookean constitutive models of the polymers \citep{Beris1994,Min2003a}.
However, it is often not a sufficiently complete descriptor of the polymer deformation; even if $\tr\,\mathsfbi{C}$ is held constant, the polymer may undergo a volumetric deformation.

\citet{Housiadas2003} evaluated $\tr\,\mathsfbi{C}$  for a wide range of flow parameters in a viscoelastic turbulent channel flow and found a surprising result for certain parameter ranges: the mean of $\tr\,\mathsfbi{C}$ can increase with increasing elasticity without a commensurate effect on the mean velocity profile.
A similar trend was also reported by \citet{Xi2010} in minimal flow unit simulations.
This trend is not inconsistent with the mean momentum balance as the mean of $\tr\,\mathsfbi{C}$ in turbulent channel flows can increase without effecting the mean velocity profile. This behaviour arises because the (mean) stress deficit is not a function of any of the normal components of $\mathsfbi{C}$.
The normal components of $\mathsfbi{C}$ do affect the mean momentum balance through the dynamical coupling between the different components of $\mathsfbi{C}$, but this relationship cannot be captured by $\tr\,\mathsfbi{C}$. 
The situation described above highlights the importance of simultaneously considering all of the components of $\mathsfbi{C}$ in order to arrive at a complete picture of the polymer deformation and its effect on the velocity field. 
Additionally, mean quantities such as those used by \citet{Housiadas2003} and others are in themselves insufficient descriptors of the fluctuating polymer deformation; higher-order statistical quantities associated with $\mathsfbi{C}$ are required to describe the fluctuations and their deviation away from the mean.

The fluctuating conformation tensor, $\mathsfbi{C}'$, and its various moments provide one method to obtain pertinent higher-order statistical descriptions of the full conformation tensor, $\mathsfbi{C}$  \citep[see also][]{Lee2017}. 
The tensor $\mathsfbi{C}'$ is obtained by subtracting the mean conformation tensor from $\mathsfbi{C}$, in analogy with the Reynolds decomposition of $\bld{u}$.
However, this fluctuating tensor is not guaranteed to be physically realizable; at least one realization of $\mathsfbi{C}'$ implies negative material deformation since it is guaranteed to lose positive-definiteness ($\text{tr}\,\mathsfbi{C}'$ must be $\leq 0$ for at least one sample pulled from a statistical ensemble).
Furthermore, it is not clear which scalar functions of $\mathsfbi{C}'$ provide mathematically consistent measures of the turbulence intensity associated with $\mathsfbi{C}$.

Although $\mathsfbi{C}'$ is not a physical conformation tensor, it can still be used for modelling Reynolds-averaged quantities in turbulence that do not necessarily require physically realizable fluctuating quantities.
Indeed, it has been used with varying degrees of success in recent work to develop turbulence models \citep{Masoudian2013,Resende2011,Iaccarino2010,Li2006}
and to quantify subgrid stress contributions in Large-Eddy Simulations (LES) \citep{Masoudian2016}. 
However, characterizing the polymer fluctuations using physically meaningful quantities is advantageous in that physical interpretations aid in modelling and provide a greater understanding of mechanism.

A physically motivated description of the velocity and conformation tensor field can be obtained using Karhunen-Lo\`{e}ve or proper orthogonal decomposition (POD), as recently shown by \citet{Wang2014}.
A POD is a global decomposition of a field quantity that yields an orthonormal basis that is optimally ordered in the sense of the best representation of the Euclidean norm \citep[see][for more details]{Lumley1970}.
For the velocity field this norm is the square-root of the kinetic energy but in a straightforward POD of the conformation tensor field the norm is not directly related to the elastic energy.
\citet{Wang2014} showed that a POD of the square-root of $\mathsfbi{C}$ instead ensures best representation in terms of the elastic energy. 
Their approach crucially assumed that $\text{tr}\,\mathsfbi{C}$ is proportional to the elastic energy. 
When this assumption is satisfied, their approach provides a valuable tool to extract the spatial structure of the dominant energetic components of viscoelastic turbulence. 
One can also use the individual modes to construct positive-definite tensors that represent modal polymer deformation, since the POD basis is orthogonal with respect to a Frobenius inner product integrated over the spatial domain. 
However, these tensors cannot be used to construct a local decomposition of the conformation tensor into mean and fluctuating components because the sum of squares is not equal to the square of the sum --- the cross contributions of the tensors only vanish when we take the trace and integrate over the spatial domain. 
Reynolds decomposing the square-root of the conformation tensor is a local approach in which the cross contribution similarly does not vanish instantaneously.

In the following subsection, we outline a local decomposition the conformation tensor into mean and fluctuating components, which overcomes the previously described difficulties with earlier approaches and which is one of the contributions of the present work. Another contribution is the development of scalar measures of the fluctuating component that depend on the Riemannian geometric structure of the set of positive-definite tensors. Although the results we invoke for the latter contribution are well-established and broadly applicable, they have not yet been widely used in fluid mechanics.

\subsection{The present approach: motivation and summary of the framework}

In the present work, we develop a formalism that allows us to evaluate the instantaneous deviation of the polymer deformation away from mean which respects the mathematical structure, and physical interpretation, of $\mathsfbi{C}$. 
Such a deviation yields an associated conformation tensor that can be used in analysis and modeling,  e.g.  the approach of \citet{Wang2014} can be adapted to the analysis of this new conformation tensor.
We also develop scalar measures of the turbulence intensity in the polymer deformation that reflect the distance, in the mathematically precise sense of a distance metric on a manifold, of the instantaneous deviation away from the mean.

In order to motivative our proposed formalism, we consider the following analogue encountered in the study of the stretching of material lines in turbulence \citep{Batchelor1952}. 
In this case, the normalized squared length of a material line, $\ell^2(t) > 0$,
serves as the scalar analogue of $\mathsfbi{C}$. 
Since material lines cannot vanish, $\ell^2 \neq 0$.
Let us assume that a statistically stationary state is possible and that $\langle\ell^2\rangle$ is then the expected value of the squared length of a material line.
A Reynolds decomposition of $\ell^2 = \langle \ell^2 \rangle + (\ell^2)'$ yields a fluctuation $(\ell^2)'(t)$ that is not always positive, which implies a negative normalized squared length.
One may, for the sake of argument, side-step the physical ambiguity implied by the negative squared length by considering only $|(\ell^2)'|$ but this does not solve the problem of asymmetry of $|(\ell^2)'|$ with respect the direction of the stretching; when $\ell^2/\langle\ell^2\rangle\in (1,\infty)$, the material line is expanded with respect to  $\langle\ell^2\rangle$ and when $\ell^2/\langle \ell^2 \rangle \in (0,1)$ it is compressed, which means that similarly probable states (expansion and contraction) would be described by fluctuations with very different magnitudes.
A meaningful way to study the fluctuations in $\ell^2$ is by instead considering $\log(\ell^2/\langle \ell^2\rangle)$.
Our goal is to generalize this latter type of construction to the conformation tensor, where one must take into account the tensorial nature of $\mathsfbi{C}$ which encodes directional information not included in a scalar such as $\ell^2$.

Following the scalar case described above, one approach to evaluate fluctuations in $\mathsfbi{C}$ is to use $\log\,\mathsfbi{C}$, in lieu of $\mathsfbi{C}$, where $\log$ here refers to the matrix logarithm. This approach is appealing because the logarithm of a positive-definite matrix is a symmetric matrix and the set of symmetric matrices form a vector space, which therefore allows for a Reynolds decomposition analogous to that of $\bld{u}$, i.e.
\begin{align}
\log\,\mathsfbi{C} = \langle \log\,\mathsfbi{C} \rangle + (\log\,\mathsfbi{C})',  \nonumber
\end{align}
where $\langle \log\,\mathsfbi{C} \rangle$ is the expected value of $\log\,\mathsfbi{C}$. 
To the authors' best knowledge, such a decomposition has not been previously used to characterize fluctuations in $\mathsfbi{C}$. However, $\log\,\mathsfbi{C}$ itself has been an object of some interest in the viscoelastic literature.
\citet{Fattal2004} introduced an approach for simulating viscoelastic flows that relied on evolving $\log\,\mathsfbi{C}$ instead of $\mathsfbi{C}$. \citet{Fattal2004} and \citet{Hulsen2005} then provided closed-form evolution equations for $\log\,\mathsfbi{C}$ which explicitly depended on both $\log\,\mathsfbi{C}$ as well as its spectral decomposition.
Recently, \citet{Knechtges2014} and \citet{Knechtges2015} eliminated the explicit dependence on the spectral decomposition but at the expense of either imposing restrictions on the spectral radius of $\mathsfbi{C}$ or introducing Dunford--Taylor-type integrals into the equations.

At least two additional difficulties arise in using $\log\,\mathsfbi{C}$. The first is that the expected value of $\mathsfbi{C}$ is not equal to $\e{\langle \log\,\mathsfbi{C} \rangle}$. This fact implies that evaluating the effect of the polymer stress on the mean momentum balance requires all statistical moments of $\log\,\mathsfbi{C}$, even when the polymer stress is a linear function of $\mathsfbi{C}$. The second difficulty is that, in general, $\e{ \langle \log\,\mathsfbi{C} \rangle + (\log\,\mathsfbi{C})'} \neq \e{ \langle \log\,\mathsfbi{C} \rangle } \bcdot \e{(\log\,\mathsfbi{C})'}$ which means that there is no way to associate $(\log\,\mathsfbi{C})'$ with a conformation tensor or a physical polymer deformation. It also means that there is no clear way to separate the effect of $\langle \log\,\mathsfbi{C} \rangle$ in the fluctuating momentum balance.  
 
In this paper, we derive a new conformation tensor, $\mathsfbi{G}$, from a physical decomposition of the polymer deformation.  Instead of the traditional additive decomposition 
\begin{align}
\mathsfbi{C} = \mathsfbi{\overline{C}} + \mathsfbi{C}', \label{CReyDecomp}
\end{align}
the proposed geometric decomposition of $\mathsfbi{C}$ is given by
\begin{align}
\mathsfbi{C} = \mathsfbi{\overline{F}} \bcdot \mathsfbi{G} \bcdot \mathsfbi{\overline{F}}^{\mathsf{T}},   \label{CGeoDecomp}
\end{align}
where $\mathsfbi{\overline{F}}$ is a deformation gradient tensor that can be calculated directly from a base-flow conformation tensor, $\mathsfbi{\overline{C}}$, such that $\mathsfbi{\overline{F}}\bcdot\mathsfbi{\overline{F}}^{\mathsf{T}} = \mathsfbi{\overline{C}}$ (this choice will be justified in \S \ref{sec:decomp} below).
This conformation tensor $\mathsfbi{G}$ is analogous to the scalar fluctuating quantity, $\ell^2/\langle \ell^2\rangle$. 

The tensor $\mathsfbi{G}$ represents turbulent deviations from the mean conformation tensor and can be analysed by resorting to the curved, Riemannian geometry of the manifold of positive-definite tensors.  Interestingly, the first two moment invariants of the tensorial equivalent of $\log(\ell^2/\langle \ell^2\rangle)$, i.e. $\log\,\mathsfbi{G}$, then appear as the relevant scalar measures for the fluctuations in $\mathsfbi{G}$. The first moment invariant is the logarithm of the ratio of the volume of $\mathsfbi{C}$ to the volume of $\mathsfbi{\overline{C}}$, where the volume of the conformation tensor refers to its determinant. The latter is proportional to the squared volume of the ellipsoid representing the coarse-grained polymer \citep{Truesdell2004}. The determinant also corresponds to the sphericity or conformational probability of the molecular structure \citep{Beris1994}. The second moment invariant is the metric distance of $\mathsfbi{G}$ away from $\mathsfbi{I}$ on the manifold of second-order positive-definite tensors. Finally,  we also propose a measure of the anisotropy of $\mathsfbi{C}$ relative to the mean, based on the work of \citet{Moakher2006}. This measure is equal to the metric distance of $\mathsfbi{G}$ to the closest isotropic tensor.

Finally, we use the proposed framework and direct numerical simulations to gain insight into the dynamics of viscoelastic (FENE-P) turbulent channel flow. Such flows are known to exhibit greatly reduced drag relative to an equivalent Newtonian flow, up to 60\% or more reduction in some cases \citep{Toms1948,White2008}. For such turbulent flows, separating the mean and fluctuating components of the conformation tensor in a physically consistent manner is an important step towards developing a quantitative understanding of the dynamics and isolating the relevant mechanisms at play.

The rest of the paper is organised as follows: 
the governing equations used in viscoelastic flows are reviewed in \S  \ref{sec:govEqn}.
Section \ref{sec:decomp} presents the geometric decomposition of the conformation tensor, the associated evolution equations, and the relation between the decomposition and the elastic energy.
A review of a geometry constructed specifically for the set of positive-definite tensors along with scalar measures of the polymer deformation based on this geometry and associated evolution equations are presented in \S \ref{sec:geodesicScalars}.  
In \S \ref{sec:DNSchannel}, we present an example case study of viscoelastic turbulent channel flow to illustrate the concepts developed in the paper.

\section{Governing equations}
 \label{sec:govEqn} 
The non-dimensional governing equations for the velocity, $\bld{u}$,  and conformation tensor, $\mathsfbi{C}$, in a viscoelastic flow are
\begin{align} 
\bnabla \bcdot \bld{u}  &= 0, \label{govEqn0} \\
\dd{\bld{u}}{t} +\bnabla p - \frac{\beta}{\Rey} \Delta \bld{u} - \frac{1-\beta}{\Rey} \bnabla \bcdot \mathsfbi{T} &= 0, \label{govEqn1}  \\
\dd{\mathsfbi{C}}{t} - 2\,\sym(\mathsfbi{C} \bcdot  \bnabla \bld{u})  
+ \mathsfbi{T} &= 0, \label{govEqn2}
\end{align}
where $\dd{(\cdot)}{t} = \p_t(\cdot) + u_k \p_k(\cdot)$ is the convective derivative, $\sym(\mathsfbi{A}) = \frac{1}{2}(\mathsfbi{A} + \mathsfbi{A}^{\mathsf{T}})$ is the symmetric part of a second-order tensor $\mathsfbi{A}$, $p$ is the pressure, $\Rey$ is the Reynolds number, $\beta \in [0,1]$ is the viscosity ratio and the polymer stress, $\mathsfbi{T}$,  is a function of the conformation tensor, $\mathsfbi{C}$.
The left-hand side of (\ref{govEqn2}) is equal to the upper-convected Maxwell derivative, or the Lie derivative with respect to $\bld{u}$, of $\mathsfbi{C}$. 
Although we restrict our focus to the upper-convected Maxwell derivative, it can be replaced in (\ref{govEqn2}) with any other co-rotational derivative, or objective rate.

The functional form  of $\mathsfbi{T}(\mathsfbi{C})$
 depends on the particular 
constitutive model and strain measure used. 
Although we will not use a particular model in the theoretical development that will follow, we note for completeness that in the absence of inherent directionality in the polymers,
$\mathsfbi{T}$ is an isotropic function of $\mathsfbi{C}$ defined locally at each $(\bld{x},t)$. Therefore, by the representation theorem \citep{Truesdell2004} we have
\begin{align}
\mathsfbi{T}(\mathsfbi{C}) = \frac{1}{\Wie} \left[\mu_0(\text{I}_{\mathsfbi{C}},
\text{II}_{\mathsfbi{C}},\text{III}_{\mathsfbi{C}}) \mathsfbi{I} + \mu_1(\text{I}_{\mathsfbi{C}},
\text{II}_{\mathsfbi{C}},\text{III}_{\mathsfbi{C}}) \mathsfbi{C} + \mu_2(\text{I}_{\mathsfbi{C}},
\text{II}_{\mathsfbi{C}},\text{III}_{\mathsfbi{C}}) \mathsfbi{C}^2\right],
\label{stressDefn}
\end{align}
where the three characteriztic tensor invariants of $\mathsfbi{C}$ are defined as
\begin{align}
 \I{\mathsfbi{C}} \equiv \tr\, \mathsfbi{C}, \quad
 \II{\mathsfbi{C}} 
\equiv
\frac{1}{2}\left[
(\tr\, \mathsfbi{C})^2 - \tr\,\mathsfbi{C}^2\right]
, \quad
 \III{\mathsfbi{C}} 
\equiv \text{det}\, \mathsfbi{C}. \label{Cinvariants}
\end{align} 
and the Weissenberg number $\Wie$ is the polymer relaxation time normalized by the convective time scale.

Table \ref{tab:modelCoeff} lists the coefficient functions, $\mu_i$ for $i=1,2,3$, for two
polymer models that are popular in the viscoelastic turbulence literature. The parameter $L_{\max}$ is the maximum polymer extensibility.


In the following (primarily in sections \ref{sec:govEqnsRANS}, \ref{sec:elasticEnergy}, and \ref{REScalars}), angled brackets, $\langle \cdot \rangle$, denote Reynolds spatio-temporal filtering  \citep[see appendix A in][]{Sagaut2006}, i.e. for a variable $\phi(\bld{x},t)$
\begin{align}
\langle \phi \rangle(\bld{x},t) = 
\int \phi(\bld{r},\tau) \mathscr{G}(\bld{x}-\bld{r},t-\tau)\text{d}^3\bld{r}\,\text{d}\tau \label{ReyFiltDefn}
\end{align}
where $\mathscr{G}$ is a filtering kernel that is normalized so that $\langle 1 \rangle = 1$, and is defined such that
\begin{align}
\langle \langle f_1 \rangle \rangle =  \langle f_1 \rangle, \qquad \langle \langle f_1 \rangle f_2 \rangle = \langle f_1 \rangle \langle f_2 \rangle \label{filteringProps}
\end{align}
for any two integrable functions $f_1 = f_1(\bld{x},t)$, $f_2 = f_2(\bld{x},t)$.
The mean of a quantity $\phi$ is then $\langle \phi \rangle$ and the $n$-th moment of $\phi$ is $\langle \phi^n \rangle$.  The properties, (\ref{filteringProps}), further imply that $\langle \mathscr{F} (\langle \phi \rangle) \rangle = \mathscr{F}(\langle \phi \rangle)$ for any analytic function, $\mathscr{F}$. While the example case study presented in \S \ref{sec:DNSchannel} uses traditional Reynolds time-averaging, we present definitions using the filtering formulation since the approach is also expected to be valid more generally.

We use an overlined symbol within the present text to denote the nominal or base-flow quantity associated with the symbol, which may be distinct from the averaged or filtered quantity.

\begin{table} 
\begin{center}
\begin{tabular}{c  c c c}
\hspace{0.5in}Model\hspace{0.5in} & $\mu_0$ & $\mu_1$ & $\mu_2$\\
\hline
Oldroyd-B & $-1$ & $1$ & $0$\\
FENE-P    & 
$\left[(3/L_{\max}^2) - 1\right]^{-1}$ & 
$ \left[1 - (\text{I}_{\mathsfbi{C}}/L_{\max}^2)\right]^{-1}$ & 
0  
\end{tabular}
\end{center}
\caption{Coefficients $\mu_i$ for two common models of polymers. Note that only $\mu_1$ in the FENE-P model depends on an invariant of $\mathsfbi{C}$.}
\label{tab:modelCoeff}
\end{table}

\section{Decomposition of the conformation tensor}  
 \label{sec:decomp}  
In the following, we will denote the general linear group of degree $n$, i.e. the set of $n \times n$ matrices with non-zero determinant, as $\GL_{n}$. We define the structure-preserving group action of $\GL_n$ on a set $\textbf{W}_n\subseteq\mathbb{R}^{n\times n}$ as
\begin{align}
\glaction{\mathsfbi{A}}{\mathsfbi{B}}  \equiv \mathsfbi{A}\bcdot\mathsfbi{B}\bcdot\mathsfbi{A}^{\mathsf{T}}.
\end{align}
where $\mathsfbi{A} \in \GL_n$ and $\mathsfbi{B} \in \textbf{W}_n$ and by definition, we require $\textbf{W}_n$ to be invariant under the action.

From the perspective of continuum mechanics, $\mathsfbi{C}>0$ is the left Cauchy-Green tensor associated with the deformation of the polymers \citep{Beris1994,Rajagopal2000,Cioranescu2016}, i.e. 
\begin{align} 
\mathsfbi{C} = \mathsfbi{F}\bcdot\mathsfbi{F}^{\mathsf{T}}= \glaction{\mathsfbi{F}}{\mathsfbi{I}},
\label{C_HtH}
\end{align}
where $\mathsfbi{F}$ is the deformation gradient 
with respect to an equilibrium configuration, also known as 
the distortion tensor.
If the spatial coordinates in the micro-structure are given 
by $\bld{a} = \bld{a}(\bld{a}_0,t)$ where $\bld{a}_0$ are the 
material coordinates, 
then $\mathsfbi{F} = \bnabla_{\bld{a}_0}\bld{a}= \partial \bld{a} /\partial \bld{a}_0$ so that 
a material line $\text{d}\bld{a}_0$ deforms to $\text{d}\bld{a} =   \mathsfbi{F} \bcdot \text{d}\bld{a}_0$ 
under the deformation represented by $\mathsfbi{C}$. 
When $\mathsfbi{F}$ is restricted to be symmetric, (\ref{C_HtH}) reduces to the factorization proposed by \citet{Balci2011} to improve numerical schemes for evolving the conformation tensor equations.

Let $\mathsfbi{\overline{C}}$ be a nominal conformation tensor such as the mean or laminar base-flow conformation tensor. The only requirement we impose on $\mathsfbi{\overline{C}}$ is that it must be defined according to a rule that ensures that $\mathsfbi{C}$ and $\mathsfbi{\overline{C}}$ cannot be arbitrarily rotated with respect to each other. In other words, if $\mathsfbi{C}$ transforms to $\glaction{\mathsfbi{R}}{\mathsfbi{C}}$ then $\mathsfbi{\overline{C}}$ must transform to $\glaction{\mathsfbi{R}}{\mathsfbi{\overline{C}}}$ for any $\mathsfbi{R}\in \SO_3$,  where $\SO_n$ denotes the $n \times n$ special orthogonal group (or rotation matrices). 
Define $\mathsfbi{\overline{F}} \in \GL_3$ with $\text{det}\, \mathsfbi{\overline{F}} >0$ as the tensor that satisfies 
\begin{align}
\mathsfbi{\overline{C}} =  \mathsfbi{\overline{F}} \bcdot  \mathsfbi{\overline{F}}^{\mathsf{T}}. \label{HbarDefn}
\end{align}
Such an $\mathsfbi{\overline{F}}$ is non-unique as it can be parameterized as
\begin{align}
\mathsfbi{\overline{F}} =   \mathsfbi{\overline{C}}^{\frac{1}{2}}\bcdot \mathsfbi{R} \label{HbarSol}
\end{align}
for any $\mathsfbi{R} \in \SO_3$ and where
$\mathsfbi{\overline{C}}^{\frac{1}{2}}$ is the unique matrix square-root of $\mathsfbi{\overline{C}}$.
Since the polar decomposition of $\mathsfbi{\overline{F}}$ and the square-root of $\mathsfbi{\overline{C}}$ (up to a $\pm$ sign change) are both unique, 
(\ref{HbarSol}) is a parametrisation of all possible $\mathsfbi{\overline{F}}$. The tensor $\mathsfbi{\overline{F}}$ serves as a deformation gradient associated with the mean configuration.

The $n$-th power of a positive-definite tensor $\mathsfbi{A}$ is a tensor with the same eigenvectors as $\mathsfbi{A}$ and associated eigenvalues equal to the corresponding eigenvalues of $\mathsfbi{A}$  raised to the $n$-th power.
In practice, since these $n$-th powers are isotropic functions of $\mathsfbi{A}$, one need not explicitly perform a spectral decomposition to calculate them.
For example, an application of the representation theorem can be used to express $\mathsfbi{ {A}}^{\frac{1}{2}}$ and $\mathsfbi{ {A}}^{-\frac{1}{2}}$ solely in terms of  $\mathsfbi{A}$ and its invariants \citep{Hoger1984,Ting1985}. 

\begin{figure}
\begin{center}
\includegraphics[draft=false,scale=0.7]{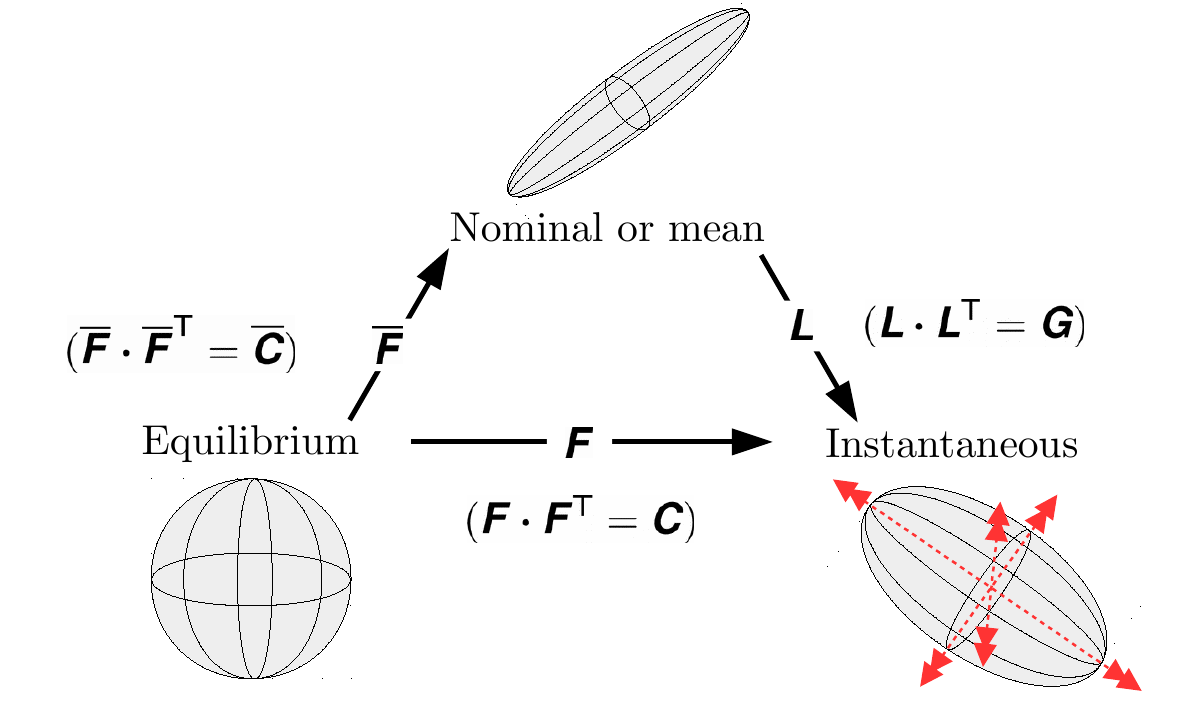}
\end{center}
\caption{Schematic of the decomposition given in (\ref{multiplicativeDecomp}) and (\ref{NecessaryDecomposition}).
The tensor $\mathsfbi{F}$ is a composition of $\mathsfbi{\overline{F}}$ and $\mathsfbi{L}$.} 
\label{fig:deformation_map}
\end{figure}

Given a specific $\mathsfbi{\overline{F}}$ one satisfying (\ref{HbarSol}), we then decompose the full distortion tensor $\mathsfbi{F}$ about $\mathsfbi{\overline{F}}$ by considering successive transformations on the material line $\text{d}\bld{a}_0$ , i.e.
\begin{align}
\text{d}\bld{a}= {\mathsfbi{F}}   \bcdot\text{d}\bld{a}_0 = \mathsfbi{\overline{F}} \bcdot {\mathsfbi{L}}  \bcdot\text{d}\bld{a}_0 \label{multiplicativeDecomp}
\end{align}
where $\mathsfbi{L}={\mathsfbi{\overline{F}}}^{\,-1}\bcdot  \mathsfbi{F}$  is the fluctuating distortion tensor.
This decomposition is illustrated in figure \ref{fig:deformation_map}.

Substituting  $\mathsfbi{F} = \mathsfbi{\overline{F}} \bcdot \mathsfbi{L}$ in (\ref{C_HtH}), we then arrive at a geometric decomposition of the conformation tensor
\begin{align}
\mathsfbi{C}  &=  
 \glaction{\mathsfbi{\overline{F}}}{\mathsfbi{G} }
 = \mathsfbi{\overline{F}}\bcdot \mathsfbi{G} \bcdot \mathsfbi{\overline{F}}^{\mathsf{T}}
\label{NecessaryDecomposition}
\end{align}
where $\mathsfbi{G} = \mathsfbi{L}\bcdot \mathsfbi{L}^{\mathsf{T}}$ is a left Cauchy-Green tensor that is analogous to $\mathsfbi{C}$. 
Comparing (\ref{NecessaryDecomposition}) and (\ref{CReyDecomp}), we can relate $\mathsfbi{C}'$ to $\mathsfbi{G}$ as follows
\begin{align}
\mathsfbi{C}'  &=  
 \glaction{\mathsfbi{\overline{F}}}{\mathsfbi{G} -\mathsfbi{I}}.
 \label{ReyGeomRel}
\end{align} 
From this point of view, the geometric decomposition provides a framework for interpreting the fluctuating tensor, $\mathsfbi{C}'$, obtained from the Reynolds decomposition.

Although a specific $\mathsfbi{G}$, and in particular only its set of principal axes, depends on $\mathsfbi{R}\in \SO_3$ chosen in (\ref{HbarSol}),  any function of only the invariants of $\mathsfbi{G}$ is independent of the choice of $\mathsfbi{R}$. This class of functions includes all objective scalar functions of $\mathsfbi{G}$; indeed, the scalar characterizations of the fluctuations that we develop later are also independent of $\mathsfbi{R}$. With respect to the full tensor, $\mathsfbi{G}$, we will later find that $\mathsfbi{R}=\mathsfbi{I}$ is a natural choice.

The decomposition of $\mathsfbi{F}$ into successive deformations, as in
(\ref{multiplicativeDecomp}), is reminiscent of the multiplicative decomposition in
large deformation theory that has found numerous applications over the last few
decades 
\citep{Casey2015,Sadik2017}. 
For example, in elasto-plasticity theory, 
the deformation gradient is decomposed into successive plastic and elastic
deformations with the objective of formulating constitutive laws for each of the
deformations somewhat independently. 
Similar constructions are used in thermo-elasticity and biomechanics
\citep{Lubarda2004}. A full review of that literature is beyond the scope of the
present work but it suffices to note that the present case is greatly simplified
because the constitutive laws are already specified and the focus is on the analysis
of the polymer deformation due to turbulence.

We next present the equations for mean and fluctuating quantities in the geometric decomposition when the nominal conformation tensor is obtained by averaging or Reynolds-filtering.

\subsection{Evolution equations in the Reynolds-filtered case}
\label{sec:govEqnsRANS}

In this section, we will consider the case when the nominal tensor is obtained using Reynolds-filtering. We choose to restrict our attention to Reynolds-filters but the development can be generalized to other filters,  e.g.  for applications in large-eddy simulations (LES).  We thus have
\begin{align}
\mathsfbi{\overline{C}} = \langle \mathsfbi{C}\rangle. \label{Cbardefn}
\end{align}
By the properties (\ref{filteringProps}), the associated $\mathsfbi{\overline{F}}$ satisfies $\langle \mathsfbi{\overline{F}} \rangle= \mathsfbi{\overline{F}}$.
Applying the averaging operation to (\ref{NecessaryDecomposition}) yields
\begin{align} 
\langle \glaction{\mathsfbi{R}}{\mathsfbi{G} } \rangle= \mathsfbi{I}
\end{align}
where $\mathsfbi{R}(\bld{x},t)$  is the rotation tensor field given in (\ref{HbarSol}). Henceforth, we will restrict the rotation tensor field so that
\begin{align}
\mathsfbi{R} = \langle \mathsfbi{R} \rangle.
\end{align}
By (\ref{filteringProps}), we then have $\langle \mathsfbi{G} \rangle = \mathsfbi{I}$. 

The Reynolds decomposition is applied to  $p$ and   $\bld{u}$ while $\mathsfbi{C}$ is decomposed using (\ref{NecessaryDecomposition}) with $\mathsfbi{\overline{C}}$ defined according to (\ref{Cbardefn}). We thus have
\begin{align}
p  &= \overline{p}+ p' , \quad
\bld{u}   = \bld{\overline{u}}+\bld{u}' ,\quad
\mathsfbi{C} =   \glaction{\mathsfbi{\overline{F}}}{\mathsfbi{G}} \label{decomp1}
\end{align} 
where  $\bld{\overline{u}} = \langle \bld{u} \rangle$ and  $\overline{p} = \langle p \rangle$ and the primes denote fluctuating quantities obtained via the Reynolds decomposition.  In general, $\overline{p} = \overline{p}(\bld{x},t)$, $\bld{\overline{u}} = \bld{\overline{u}}(\bld{x},t)$, $\mathsfbi{\overline{C}} = \mathsfbi{\overline{C}}(\bld{x},t)$. Note that $\mathsfbi{\overline{F}} \neq \langle \mathsfbi{F}\rangle$, in general.

Following the standard procedure, we can then decompose the momentum equation as follows
\begin{align}
\p_t\bld{\overline{u}} + \bld{\overline{u}}\bcdot \bnabla \bld{\overline{u}} &= - \bnabla \overline{p} +
\frac{\beta}{\Rey} \Delta \bld{\overline{u}} + \frac{1-\beta}{\Rey} \bnabla \bcdot \mathsfbi{\overline{T}} - \bnabla\bcdot \overline{ \bld{u}'\bld{u}'} \label{momentumEqn1}
\\
\p_t\bld{u}' 
+ \bld{\overline{u}}\bcdot \bnabla \bld{u}' + \bld{u}'\bcdot \bnabla \bld{\overline{u}}
&= - \bnabla p' + \frac{\beta}{\Rey} \Delta \bld{u}' + \frac{1-\beta}{\Rey} \bnabla \bcdot \mathsfbi{T}' 
- \bnabla \bcdot\left( \bld{u}'\bld{u}'\right)' \label{momentumEqn2}
\end{align}
where $\mathsfbi{\overline{T}} = \langle \mathsfbi{T} \rangle$, 
$\mathsfbi{T}'=\mathsfbi{T} - \mathsfbi{\overline{T}}$,
$\overline{ \bld{u}'\bld{u}'}=\langle \bld{u}'\bld{u}' \rangle$ and
$(\bld{u}'\bld{u}')' = \bld{u}'\bld{u}' - \overline{ \bld{u}'\bld{u}'}$.

The precise form of $\mathsfbi{\overline{T}}$, which appears in the mean momentum equation in (\ref{momentumEqn2}), depends on the constitutive model used. In the Oldroyd-B model, $\mathsfbi{\overline{T}}$ only depends on $\mathsfbi{\overline{F}}$:
\begin{align}
\mathsfbi{\overline{T}} = \frac{1}{\Wie}\left( \mathsfbi{\overline{F}}\bcdot \mathsfbi{\overline{F}}^{\mathsf{T}} - \mathsfbi{I} \right).
\label{meanStress_OLB}
\end{align}
In models that are nonlinear in $\mathsfbi{C}$, the fluctuating tensor $\mathsfbi{G}$ cannot be eliminated or factored out of $\mathsfbi{\overline{T}}$. For example, in the FENE-P model, $\mathsfbi{\overline{T}}$ can be expressed as a series in which the dominant term is equal to (\ref{meanStress_OLB}) while the remaining terms depend on higher-order moments of $\mathsfbi{G}$. In general, we have 
\begin{align}
\mathsfbi{\overline{T}} = \frac{1}{\Wie} \left[ \langle\mu_0\rangle  \mathsfbi{I} + \mathsfbi{\overline{F}}\bcdot \left\langle   \mu_1   \mathsfbi{G}   +  \mu_2    \mathsfbi{G} \bcdot \mathsfbi{\overline{F}}^{\mathsf{T}}\bcdot \mathsfbi{\overline{F}}\bcdot \mathsfbi{G} \right\rangle\bcdot \mathsfbi{\overline{F}}^{\mathsf{T}} \right].
\end{align}

Substituting (\ref{decomp1}) into (\ref{govEqn2}) and applying the filtering operation $\langle \cdot \rangle$ defined in (\ref{ReyFiltDefn}) yields the following equations for $\mathsfbi{\overline{C}}$  
\begin{multline}
\p_t \mathsfi{\overline{C}}_{ij} + \overline{u}_k\p_k \mathsfi{\overline{C}}_{ij} -
\left( \mathsfi{\overline{C}}_{ik} \p_{k}\overline{u}_j+ \mathsfi{\overline{C}}_{jk} \p_{k}\overline{u}_i \right) +  \mathsfi{\overline{T}}_{ij} 
=
-\p_k\Big[ 
\mathsfi{\overline{F}}_{ip} \mathsfi{\overline{F}}_{qj}^{\mathsf{T}}
\underbrace{\langle \mathsfi{G}_{pq} u_k' \rangle}_{\text{(a)}}\Big]\\
+
\mathsfi{\overline{F}}_{ip}\mathsfi{\overline{F}}_{qk}^{\mathsf{T}} 
\underbrace{\langle \mathsfi{G}_{pq} \p_{k}u_j'  \rangle}_{\text{(b)}} +
\mathsfi{\overline{F}}_{jp}\mathsfi{\overline{F}}_{qk}^{\mathsf{T}}
\underbrace{\langle \mathsfi{G}_{pq}  \p_{k}u_i' \rangle}_{\text{(c)}}.
\label{CbarEqn}
\end{multline}
Term (a) is the averaged turbulent transport and terms (b), (c) describe the mean stretching and rotation of the polymer arising due to the gradients in the fluctuating velocity field. 
The right-hand side of the equation (\ref{CbarEqn}) is the cumulative effect of the turbulent fluctuations on the mean balance. The mean balance (\ref{CbarEqn}) can also be written as
\begin{align} 
 \mathsfbi{\overline{M}}=
2\,\text{sym}\, \left\{\mathsfbi{\overline{E}}+ \langle \mathsfbi{G} \bcdot  \mathsfbi{E}'\rangle 
 - 
  \left[ \p_t \mathsfbi{\overline{F}}^{\mathsf{T}}  +
(\bld{\overline{u}} + \langle \mathsfbi{G}\bcdot\bld{u}' \rangle) \bcdot\bnabla \mathsfbi{\overline{F}}^{\mathsf{T}}  \right] \bcdot \mathsfbi{\overline{F}}^{-\mathsf{T}}
  \right\}  
-  \langle \bld{u}'\bcdot \bnabla \mathsfbi{G} \rangle  
\label{DHbarHbarInv}
\end{align}
where $\mathsfbi{\overline{E}} = \langle \mathsfbi{E}\rangle$, $\mathsfbi{E}'=\mathsfbi{E}-\mathsfbi{\overline{E}}$, $\mathsfbi{\overline{M}} = \langle \mathsfbi{M}\rangle$, and
\begin{align}
\mathsfbi{E}&\equiv \mathsfbi{\overline{F}}^{\mathsf{T}}\bcdot  \bnabla\bld{u} \bcdot \mathsfbi{\overline{F}}^{-\mathsf{T}} \\
\mathsfbi{M} &\equiv
\mathsfbi{\overline{F}}^{-1} \bcdot\mathsfbi{T} \bcdot \mathsfbi{\overline{F}}^{-\mathsf{T}}.
\label{GStress}
\end{align}
Here, the tensors $\mathsfbi{E}$ and $\mathsfbi{M}$ serve as modified velocity gradient and polymer stress tensors. Note that the invariants of $\bnabla\bld{u}$ and $\mathsfbi{E}$ coincide. The equation (\ref{DHbarHbarInv}) shows that the mean modified stress, $\mathsfbi{\overline{M}}$, is a function of the mean velocity gradient, $\mathsfbi{\overline{E}}$, the mean stretching due to turbulent velocity gradients $\langle \mathsfbi{G}\bcdot \mathsfbi{E}'\rangle$, and the turbulent advection of the fluctuating polymer deformation, $\langle \bld{u}'\bcdot \bnabla \mathsfbi{G} \rangle$. Additionally, new terms appear due to the time-rate of change of $\mathsfbi{\overline{F}}$, and modified advection of $\mathsfbi{\overline{F}}$.

%

We can find the evolution equation for $\mathsfbi{G}$ by substituting (\ref{decomp1}) into (\ref{govEqn2}) and simplifying, which yields 
\begin{align}
\dd{\mathsfbi{G}}{t}
=
2\, \sym(\mathsfbi{G} \bcdot \mathsfbi{K})
- \mathsfbi{M}
\label{GEvolEqn}
\end{align}
where $\mathsfbi{K}$ is given by
\begin{align}
\mathsfbi{K}  &\equiv 
\mathsfbi{E} 
- 
\left(\mathsfbi{\overline{F}}^{-1}\bcdot \frac{\text{D} \mathsfbi{\overline{F}}}{\text{D} t}\right)^{\mathsf{T}} 
 , 
\label{GVelGrad} 
\end{align}
and represents the modified velocity gradient augmented with an additional stretching that arises due to the decomposition.
The expression (\ref{GEvolEqn}) is more general than considered here; an equivalent expression can be derived when the nominal tensor, $\mathsfbi{\overline{C}}$, is not equal to the mean conformation tensor.
%
%

The higher-dimensional nature of $\mathsfbi{G}$ makes the quantification of the fluctuating turbulent polymer deformation a more difficult task.  We will examine the elastic potential energy as a method to evaluate this deformation in the next subsection and then introduce more general scalar characterizations of $\mathsfbi{G}$ in the next section.

 	\subsection{Elastic energy and its relation to $\mathsfbi{G}$}
 \label{sec:elasticEnergy}
 
The turbulent mean polymer configuration is not the thermodynamic equilibrium state, and thus $\mathsfbi{G}$ alone is not sufficient to fully determine thermodynamic quantities such as the elastic potential energy, $\varepsilon_{\psi}(\mathsfbi{C})$.  
For example, \citet{Beris1994} define $\varepsilon_{\psi}(\mathsfbi{C})$ for an Oldroyd-B model as
\begin{align}
\varepsilon_{\psi}(\mathsfbi{C}) =  \int_{\Omega} \psi\,  \text{tr}\,( \mathsfbi{\overline{C}}\bcdot\mathsfbi{G}) \, \text{d}^3 \bld{x}. \label{OLB_energy}
\end{align}
where we have rewritten the expression in terms of $\mathsfbi{G}$ and $\mathsfbi{\overline{C}}$  by setting $\mathsfbi{\overline{F}} = \mathsfbi{\overline{C}}^{\frac{1}{2}}$ and using the cyclic property of the trace to obtain $\text{tr}\,\mathsfbi{C}=\text{tr}\,(\mathsfbi{\overline{C}}\bcdot\mathsfbi{G})$. Here, $\Omega$ is the spatial domain, and the scalar function $\psi(\bld{x})$ is proportional to the polymer elastic constant times the elasticity density.

 The mean elastic potential, $\langle \varepsilon_{\psi}(\mathsfbi{C})\rangle$, for the Oldroyd-B model has the convenient property that it can be written solely in terms of the mean conformation tensor: $\langle \varepsilon_{\psi}(\mathsfbi{C})\rangle = \varepsilon_{\psi}(\mathsfbi{\overline{C}})$. However, the contribution of $\mathsfbi{G}$ in $\varepsilon_{\psi}(\mathsfbi{C})$ cannot be fully separated from that of $\mathsfbi{\overline{C}}$ because $\text{tr}\, \mathsfbi{A}\bcdot \mathsfbi{B} \neq \text{tr}\, \mathsfbi{A}\,\text{tr}\, \mathsfbi{B}$. Nonetheless, insight into the role of the different contributions can be obtained by using a trace inequality proved by \citet{Mori1988}, which yields
 	\begin{align}
 	\varepsilon_{\psi_3}(\mathsfbi{G}) 
 	\leq \varepsilon_{\psi}(\mathsfbi{C}) \leq  
 	\varepsilon_{\psi_1}(\mathsfbi{G}),
 	\qquad
 	\psi_i \equiv \psi\sigma_i(\mathsfbi{\overline{C}}), \label{OLB_energy_inequality}
 	\end{align}
 	where $\sigma_i(\mathsfbi{A})$ denotes the $i$-th largest eigenvalue of a tensor $\mathsfbi{A}$. In terms of the bounds in (\ref{OLB_energy_inequality}), the contribution of $\mathsfbi{\overline{C}}$ to $\varepsilon_{\psi}(\mathsfbi{C})$ is equivalent to a modification of the local elasticity density or the elastic constant.
 	
 	In other constitutive models, the contribution to the elastic potential energy from the mean polymer deformation is more difficult to separate. For example, the elastic potential energy for the FENE-P model \citep{Beris1994} is  
 	\begin{align}
 	\varepsilon_{\psi}(\mathsfbi{C}; L_{\max}) = -  \int_{\Omega} \psi L_{\max}^2 \log \left( 1 - \frac{ \text{tr}\,( \mathsfbi{\overline{C}}\bcdot\mathsfbi{G})}{L_{\max}^2}\right)\, \text{d}^3 \bld{x}, \label{FENEP_energy}
 	\end{align}
 	where $L_{\max}$, $\Omega$ and $\psi$ are as defined before. 
 	Here, the mean elastic potential energy, $\langle\varepsilon_{\psi}(\mathsfbi{C}; L_{\max})\rangle$, cannot be separated from $\mathsfbi{G}$ as in the Oldroyd-B model because, according to (\ref{FENEP_energy}), $\langle \varepsilon_{\psi}(\mathsfbi{C}) \rangle  \neq \varepsilon_{\psi}(\mathsfbi{\overline{C}})$. However, we can again bound $\varepsilon_{\psi}(\mathsfbi{C}; L_{\max})$ as
	\begin{align}  
\varepsilon_{\psi}(\mathsfbi{G}; L_{\max,3})
\leq
\varepsilon_{\psi}(\mathsfbi{C};L_{\max})
\leq
\varepsilon_{\psi}(\mathsfbi{G}; L_{\max,1}), \qquad 
L_{\max,i} \equiv L_{\max}/(\sigma_i(\mathsfbi{\overline{C}}))^{\frac{1}{2}}.   \label{FENEP_energy_inequality}
\end{align} 
 	In terms of the bounds in (\ref{FENEP_energy_inequality}), the contribution of $\mathsfbi{\overline{C}}$ in $\varepsilon_{\psi}(\mathsfbi{C};L_{\max})$ is equivalent to a modification of the local polymer maximum extensibility.
 	
 	Elastic energy may itself be insufficient to fully characterize the polymer deformation.
 	For example, in both Oldroyd-B and FENE-P models, the elastic energy is equal for all conformation tensors that are given by 
 	\begin{align}
 	\mathsfbi{C} = \glaction{\mathsfbi{Q}}{\text{diag}(\alpha_1 + \delta,\alpha_2-\delta,\alpha_3)}, \quad 0 \leq \delta < \alpha_2
 	\end{align}
 	where $\text{diag}(\phi_1,\phi_2,\phi_3)$ denotes a diagonal tensor with the $i$-th diagonal component given by $\phi_i$, and $\alpha_1 \geq \alpha_2 \geq\alpha_3 > 0$ and $\mathsfbi{Q}\in \SO_3$ are fixed. Even though the trace is fixed, the volume of the deformation ellipsoid changes with $\delta$, and is given by 
 	\begin{align} 
 	\text{det}\, \mathsfbi{C} =  \alpha_1\alpha_2\alpha_3+ (\alpha_1-\alpha_2)\alpha_3\delta +  \delta^2\alpha_3. \label{detExample}
 	\end{align}
 	In addition, since the governing equations are not Hamiltonian \citep{Beris1994}, the elastic potential energy only provides a partial characterization of the dynamics underlying the polymer deformation. Due to the above limitations of the elastic energy, and its dependence on the choice of the particular  constitutive model, we instead develop an approach to characterizing the polymer deformation using the inherent geometric structure underyling $\mathsfbi{G}$. This approach, introduced in the next section, is mathematically rigorous and can be applied to any positive-definite tensor.

\section{A Riemannian approach to the fluctuating polymer deformation}
\label{sec:geodesicScalars}

Any scalar characterization of $\mathsfbi{G}$ obeying the principle of objectivity can be a function only of its
invariants, $\I{\mathsfbi{G}}$, $\II{\mathsfbi{G}}$ and $\III{\mathsfbi{G}}$. The invariants can be
interpreted in terms of the fluctuating deformation ellipsoid, i.e. the ellipsoid
associated with $\mathsfbi{G}$.  The first invariant, $\I{\mathsfbi{G}}$, is proportional to the
average radius of the ellipsoid, the second invariant, $\II{\mathsfbi{G}}$, is proportional
to a lower bound for the surface area \citep{Klamkin1971}, and the third invariant
$\III{\mathsfbi{G}}$ is the volume of the deformation ellipsoid. Note that the eigenvalues (or principal stretches) of a conformation tensor are equal to the squared polymer stretches.

In practice, multiple difficulties arise in naively using the invariants of ${\mathsfbi{G}}$ to
characterize the conformation tensor. For example, consider the
isotropic case with $\mathsfbi{C} = a\mathsfbi{I}$ and $\mathsfbi{\overline{C}} = b\mathsfbi{I}$. We then
have $\mathsfbi{G} = (a/b)\mathsfbi{I}$ and the three invariants reduce to
\begin{align}
\I{\mathsfbi{G}} = 3a/b, \qquad
\II{\mathsfbi{G}} = 3(a/b)^2, \qquad
\III{\mathsfbi{G}} = (a/b)^3,
\end{align} 
which implies that the invariants are bounded between $0$ and $1$ for compressions with
respect to $\mathsfbi{\overline{C}}$ and between $1$ and $+\infty$ for expansions with
respect to $\mathsfbi{\overline{C}}$. This inherent asymmetry in the characterization is undesirable. The statistical moments of the invariants also vary over several orders of
magnitude, rendering these moments uninformative predictors of the level of turbulent
stretching in the polymers.

The problems discussed above arise because the set of $n\times n$ positive definite matrices, denoted $\Pos_n$, for $n > 0$, does not form a vector space and thus the Euclidean notions of translation and shortest distances between points are not
valid.
For example, let $\mathsfbi{A},\mathsfbi{B} \in \Pos_3$, and define $\mathsfbi{X}$ as \begin{align}
\mathsfbi{X} \equiv r \mathsfbi{A} + (1-r) \mathsfbi{B}, \qquad r\in \mathbb{R}.
\end{align}
One may wish to use the parameter $r$ to denote `distance of $\mathsfbi{X}$ to $\mathsfbi{A}$' along the `direction between $\mathsfbi{A}$ and $\mathsfbi{B}$'. However, $\mathsfbi{X}$ is then guaranteed to be positive-definite only if $r \in [0,1]$. While $\Pos_3$ is not a vector space, it has a Riemannian geometric
structure that can exploited to formulate alternative scalar measures of $\mathsfbi{G}$ that do not suffer from the problems mentioned above.  We introduce this geometry in \S \ref{sec:geodesic}, including definitions of shortest paths and distances between tensors.  Subsequently, in \S \ref{sec:scalars}, we introduce scalar measures based on the development in \S \ref{sec:geodesic} that can be used to quantify the turbulent fluctuations in the polymers. In \S \ref{REScalars}, we derive the Reynolds-filtered evolution equations for the scalar measures.

\subsection{Geodesic curves and distances between positive-definite tensors}
\label{sec:geodesic}

The set $\Pos_3$ is a Cartan-Hadamard manifold: it is a simply-connected, geodesically complete Riemannian 
manifold with seminegative curvature \citep{Lang2001}.
We summarize this characterization in the present section in order to develop a notion of distances between positive-definite tensors that will be used to formulate appropriate scalar measures of the fluctuating conformation tensor $\mathsfbi{G}$. Details on the Riemannian structure of $\Pos_3$ and theorems leading to the results used in
this section are presented in the appendix.

Consider two matrices $\mathsfbi{X},\mathsfbi{Y} \in \Pos_3$. In the set of all curves along the manifold $\Pos_3$
connecting $\mathsfbi{X}$ and $\mathsfbi{Y}$, there exists a unique curve that minimizes the distance between
$\mathsfbi{X}$ and $\mathsfbi{Y}$ with respect to the Riemannian metric on $\Pos_3$, i.e. there exists a $\mathsfbi{P}(r)$ with
$\mathsfbi{P}(0) = \mathsfbi{X}$ and $\mathsfbi{P}(1) =  \mathsfbi{Y}$ that uniquely minimizes the distance traversed along the
manifold between $\mathsfbi{X}$ and $\mathsfbi{Y}$. We call this curve the geodesic curve along the manifold and it is given by
\begin{align}
\mathsfbi{X}\#_r\mathsfbi{Y} =
\glaction{\mathsfbi{X}^{\frac{1}{2}}}
{ \left( \glaction{\mathsfbi{X}^{-\frac{1}{2}}}{\mathsfbi{Y}}\right)^r}, \qquad
0 \leq r \leq 1.
\end{align}
The geodesic distance associated with the geodesic curve between $\mathsfbi{X}$ and $\mathsfbi{Y}$ is the minimum separation between them along the manifold and is given by
\begin{align}
d(\mathsfbi{X},\mathsfbi{Y}) = \left[\sum_{i=1}^{3} (\log\,\sigma_i(\mathsfbi{X}^{-1}\bcdot\mathsfbi{Y}))^2\right]^{\frac{1}{2}} = \sqrt{\tr \log^2\left( \mathsfbi{X}^{-\frac{1}{2}}\bcdot\mathsfbi{Y}\bcdot \mathsfbi{X}^{-\frac{1}{2}}\right)}.
\label{distXY}
\end{align} 
The distance, $d(\mathsfbi{X},\mathsfbi{Y})$, is \emph{affine invariant}, i.e. $d(\mathsfbi{X},\mathsfbi{Y}) = d(\glaction{\mathsfbi{A}}{\mathsfbi{X}},\glaction{\mathsfbi{A}}{\mathsfbi{Y}}) $ for all $\mathsfbi{A} \in \GL_3$.

Geodesic curves and distances along a Riemannian manifold are analogous to straight lines and
distances in Euclidean space. In the case of $\Pos_3$ , the analogy can be taken quite far
because $\Pos_3$ is geodesically complete; a geodesic connecting any two points on the
manifold parameterized by $r$ can be arbitrarily extended letting $r \in \mathbb{R}$. For example, $\mathsfbi{X}\#_r\mathsfbi{Y}$ with $r \in [0,a]$, is a geodesic between $\mathsfbi{X}\#_0\mathsfbi{Y}$ and
$\mathsfbi{X}\#_a\mathsfbi{Y}$ for all $[0, a] \subseteq \mathbb{R}$. Furthermore, for each $r > 0$, we have
\begin{align}
d(\mathsfbi{X},\mathsfbi{X}\#_r\mathsfbi{Y}) = r\,d(\mathsfbi{X},\mathsfbi{Y}).
\end{align}  
We now illustrate the geodesic distance derived above using two specific examples. 
\begin{enumerate}
\item \emph{Isotropic tensors}:
Let $\mathsfbi{X} = a\mathsfbi{I}$ and $\mathsfbi{Y} = b\mathsfbi{I}$ be elements of the one-dimensional sub-manifold of $\Pos_3$ consisting of the isotropic tensors. The geodesic path joining $\mathsfbi{X}$ and $\mathsfbi{Y}$ is given by
\begin{align}
\mathsfbi{X}\#_r\mathsfbi{Y} = (a^{1-r}b^r)\mathsfbi{I} \label{isotropPath}
\end{align}
and the geodesic distance is given by 
$d(\mathsfbi{X},\mathsfbi{Y}) = \sqrt{3}\log\,(b/a)$.

Notice that $a^{1-r}b^r$, which appears in (\ref{isotropPath}), is a generalized geometric mean of $a$ and
$b$ with the classical definition realized at $r = 1/2$ . It can be shown that a similar
interpretation is admissible when $\mathsfbi{X}$ and $\mathsfbi{Y}$ are not isotropic \citep{Bhatia2015}. This fact
has formed the basis of efforts to formulate alternative definitions of statistical quantities
such as means and covariances so that they conform to the geometric structure of $\Pos_3$ \citep{Pennec2006,Fletcher2007}.

\item \emph{Tensors differing by a rotation}: Consider $\mathsfbi{X}$ and $\mathsfbi{Y} = \glaction{\mathsfbi{R}}{\mathsfbi{X}}$ for 
$\mathsfbi{R}\in\SO_3$. The geodesic joining $\mathsfbi{X}$ and $\mathsfbi{Y}$ is given by
\begin{align}
\mathsfbi{X}\#_r\mathsfbi{Y} = \glaction{\mathsfbi{X}^{\frac{1}{2}}}
{\left(\glaction{\mathsfbi{X}^{-\frac{1}{2}} \bcdot \mathsfbi{R}  }{\mathsfbi{X}}\right)^r}.
\end{align}
The distance between $\mathsfbi{X}$ and $\mathsfbi{Y}$ is then bounded as  
\begin{align}
0 \leq d(\mathsfbi{X},\mathsfbi{Y}) \leq
\sqrt{3} \min\,\left\{   
\max_i \left\{ \left| \log \left(\frac{\sigma_i(\mathsfbi{Y})}{\sigma_3(\mathsfbi{X})}\right) \right| \right\},
\max_i \left\{ \left| \log \left(\frac{\sigma_1(\mathsfbi{Y})}{\sigma_i(\mathsfbi{X})}\right) \right| \right\}
 \right\} \label{drotbnd}
\end{align}
where the lower bound is achieved for $\mathsfbi{R}=\mathsfbi{I}$. 
The upper bound in (\ref{drotbnd}) suggests that a differential rotation of a second-order tensor, $\mathsfbi{X}$, leads to an excursion along $\Pos_3$ with a path length that depends on the anisotropy of $\mathsfbi{X}$. For isotropic $\mathsfbi{X}$, the path length is zero and it otherwise increases with increasing anisotropy. 
 
\end{enumerate}

The geometry of $\Pos_3$ and the properties discussed above are next used to define scalar measures that characterize the turbulent fluctuations in $\mathsfbi{G}$.

\subsection{Scalar measures of the fluctuating conformation tensor}
\label{sec:scalars}
In this subsection we introduce scalar measures that can be used to quantify the fluctuating polymer deformation represented by $\mathsfbi{G}$.
In what follows, we will denote the matrix logarithm of $\mathsfbi{G}$ as $\bld{\mathcal{G}}$, i.e.
\begin{align}
\mathsfbi{G} 
=
\sum_{k=0}^{\infty}\frac{\bld{\mathcal{G}}^k}{k!}  
\equiv \text{e}^{\bld{\mathcal{G}}}.
\end{align}
The matrix logarithm is guaranteed to exist and is unique since $\mathsfbi{G}$ is positive-definite.  A key point to note is that the eigenvalues of $\bld{\mathcal{G}}$ are the logarithms of the eigenvalues of $\mathsfbi{G}$.
 
\subsubsection{Logarithmic volume ratio, $\zeta$}
Let $\Gamma_i = \sigma_i(\mathsfbi{G})$, for $i=1,2,3$, be the eigenvalues of $\mathsfbi{G}$. Then $\log\,(\text{det}\, \mathsfbi{G}) = \log ( \prod_{i=1}^3 \Gamma_i) = \sum_{i=1}^3 \log\Gamma_i$.
We thus define the logarithmic volume ratio of the fluctuation, $\zeta$, as
\begin{align}
\zeta &\equiv \tr\, \bld{\mathcal{G}} 
= \log( \text{det}\,\mathsfbi{G})
=  \log \left( \frac{\text{det}\,\mathsfbi{C}}{\text{det}\,\mathsfbi{\overline{C}} }\right).
 \label{zeta1defn}
\end{align} 
When $\zeta = 0$, the mean and the instantaneous conformation tensors have the same
volume; when $\zeta$ is negative (positive), the instantaneous conformation tensor has
a smaller (larger) volume than the volume of the mean. The logarithm ensures that there is no asymmetry between compressions and expansions with respect to the mean.

\subsubsection{Squared distance from the mean, $\kappa$}
When $\mathsfbi{C} = \mathsfbi{\overline{C}}$, we have $\mathsfbi{G}=\mathsfbi{I}$. When $\mathsfbi{C} \neq \mathsfbi{\overline{C}}$, we wish to consider the (appropriately defined) shortest distance between $\mathsfbi{I}$ and $\mathsfbi{G}$ as a measure of the magnitude of the fluctuation. The shortest path between $\mathsfbi{I}$ and $\mathsfbi{G}$ along the manifold $\Pos_3$ is given  by the geodesic along $\Pos_3$ connecting $\mathsfbi{I}$ and $\mathsfbi{G}$,
\begin{align}
 \mathsfbi{I} \#_r \mathsfbi{G}   =
  \mathsfbi{G}^r = \text{e}^{\bld{\mathcal{G}}r}.
\end{align}  
The squared geodesic distance associated with this path is then 
 \begin{align}
 \kappa &\equiv \tr\, \bld{\mathcal{G}}^2  =
 d^2(\mathsfbi{I},\mathsfbi{G}) =
 \sum_{i=1}^{3} (\log\,\Gamma_i)^2,
 \label{zeta2defn} 
 \end{align}
 where  (\ref{zeta2defn}) follows from (\ref{distXY}).
 Using (\ref{distXY}), one can verify that $d^2(\mathsfbi{I},\mathsfbi{G})=d^2(\mathsfbi{I},\mathsfbi{G}^{-1})$ and thus the squared distance measure treats both expansions and compressions with respect to the mean similarly. The affine-invariance property, furthermore, ensures that
 \begin{align}
 d^2(\mathsfbi{I},\mathsfbi{G})=d^2(\glaction{\mathsfbi{A}}{\mathsfbi{I}},\glaction{\mathsfbi{A}}{\mathsfbi{G}})
 \end{align}
 for all $\mathsfbi{A} \in \GL_3$. In particular, with $\mathsfbi{A} = \mathsfbi{\overline{F}}$, we obtain 
 \begin{align}
 d^2(\mathsfbi{I},\mathsfbi{G})
 =d^2(\mathsfbi{\overline{C}},\glaction{\mathsfbi{\overline{F}}}{\mathsfbi{G}})
 =d^2(\mathsfbi{\overline{C}}, \mathsfbi{C})=
 d^2(\mathsfbi{\overline{C}}^{-1}, \mathsfbi{C}^{-1})
 \end{align}
which exhibits the highly desirable property that the squared distance between $\mathsfbi{C}$ and $\mathsfbi{\overline{C}}$ is equal to the squared distance between $\mathsfbi{I}$ and $\mathsfbi{G}$. 
A further consequence of the affine-invariance property is that $d^2(\mathsfbi{I},\mathsfbi{G})$ is independent of the choice of the rotation $\mathsfbi{R} \in \SO_3$ in (\ref{HbarSol}).

The path between $\mathsfbi{\overline{C}}$ and $\mathsfbi{C}$ along $\Pos_3$ is given by 
\begin{align} 
  \mathsfbi{\overline{C}} \#_r \mathsfbi{C}  &= 
 \glaction{\mathsfbi{\overline{C}}^{\frac{1}{2}} }{
 \left( \glaction{\mathsfbi{\overline{C}}^{-\frac{1}{2}}}{\mathsfbi{C} } \right)^{r}},
\end{align}
which reduces to 
\begin{align}
  \mathsfbi{\overline{C}} \#_r \mathsfbi{C}  = \glaction{\mathsfbi{\overline{F}}}{\mathsfbi{G}^r}
\end{align}
when $\mathsfbi{R}=\mathsfbi{I}$ in (\ref{HbarSol}).
The choice $\mathsfbi{R}=\mathsfbi{I}$ is then natural in the sense that it allows the path along the manifold between
$\mathsfbi{\overline{C}}$ and $\mathsfbi{C}$, whose distance is a measure of the fluctuation, to be 
described using only $\mathsfbi{\overline{F}}$ and $\mathsfbi{G}$.

We next consider realizability in the $(\zeta,\kappa)$ plane.
Since $\bld{\mathcal{G}}$ is symmetric, its eigenvalues must be real, i.e. the eigenvalues must together belong in $\mathbb{R}^3$. 
In the $\mathbb{R}^3$ space of eigenvalues of $\bld{\mathcal{G}}$, surfaces of constant $\zeta$ are planes, and surfaces of constant $\kappa$ are spheres.
A particular choice of $\zeta$ and $\kappa$ is realizable only if the plane and sphere intersect.
The coordinates along the intersecting circle of the sphere satisfy
\begin{align}  
\cos\theta \sin \phi +
\sin \theta \sin \phi + 
\cos \phi = \frac{\zeta}{\sqrt{\kappa}}, \qquad
\theta \in [0,2\pi), \, \phi \in [0,\pi]
\label{admissible_zeta12_ratio}
\end{align}
where $\phi$ is the inclination angle and $\theta$ is the azimuthal angle, in a spherical coordinate representation of $\mathbb{R}^3$. The physically realizable region in the $(\kappa,\zeta)$ plane is thus given by
\begin{align}
-\sqrt{3 \kappa }  \leq
\zeta \leq 
\sqrt{3 \kappa } .
\label{admissible_zeta12_bounds}
\end{align}
When $\kappa = \frac{1}{3}\zeta^2$, the circle of intersection reduces to a point 
and $\bld{\mathcal{G}}$ consequently has only two independent tensor invariants, $\zeta$ and $\kappa$.
The angles that maximize the left-hand side of (\ref{admissible_zeta12_ratio}) are given by $\theta_{\max} = \pi/4$, $\phi_{\max} = \arctan \sqrt{2}$.
At $(\theta,\phi) = (\theta_{\max},\phi_{\max})$, it is readily verified that $\kappa = \frac{1}{3}\zeta^2$ and also that the eigenvalues of $\bld{\mathcal{G}}$ are all equal.
Thus $\bld{\mathcal{G}}$, and hence $\mathsfbi{G}$, is isotropic at the realizability bounds.

\subsubsection{Anisotropy index, $\xi$}
Following the approach taken by \citet{Moakher2006}, we define the anisotropy index, $\xi$, of $\mathsfbi{G}$ as the squared geodesic distance between $\mathsfbi{G}$ and the closest isotropic tensor, 
\begin{align}
\xi &\equiv \inf_{a} \, d^2(a\mathsfbi{I},\mathsfbi{G}) 
= 
\inf_{a} \,  
\tr\, (\bld{\mathcal{G}} - (\log\,a)\mathsfbi{I})^2. \label{chiDefn}
\end{align} 
By differentiation, we find that $a^3 = \prod_{i=1}^3 \sigma_i (\mathsfbi{G}) = \text{det}\, \mathsfbi{G}$ is a minimizing stationary point of (\ref{chiDefn}) and hence the closest isotropic tensor to $\mathsfbi{G}$ along $\Pos_3$ is $(\sqrt[3]{\text{det}\, \mathsfbi{G}})\mathsfbi{I}$. We then have
\begin{align}
\xi &= d^2((\sqrt[3]{\text{det}\, \mathsfbi{G}})\mathsfbi{I},\mathsfbi{G}) =   \kappa - \frac{1}{3} \zeta^2. \label{chiDefn1}
\end{align} 
Notice that $\chi=0$ if and only if $\zeta^2 = 3 \kappa$.
But since we already showed that $\bld{\mathcal{G}}$, and hence $\mathsfbi{G}$, are isotropic at the bound $\zeta^2 = 3 \kappa$, it follows that $\xi=0$ only for isotropic tensors.

\citet{Batchelor2005} first introduced the index $\sqrt{\xi}$ for characterizing positive-definite diffusion tensors measured in magnetic resonance imaging.
The index is analogous to the  `fractional anisotropy index' that is commonly used in turbulence and which provides the Euclidean distance to the closest isotropic tensor,
\begin{align}
\frac{\| \mathsfbi{G} - (\tr\, \mathsfbi{G}/3 )\mathsfbi{I} \|_F}{\| \mathsfbi{G} \|_F}, \label{fracIndex}
\end{align}
where $\| \mathsfbi{A} \|_F = \tr\,(\mathsfbi{A}^{\mathsf{T}}\bcdot \mathsfbi{A})$ indicates the Frobenius norm of matrix $\mathsfbi{A}$. 
A review of anisotropy measures is available in \citet{Moakher2006}. 

The three scalar measures presented above can be used together to obtain a better understanding of the fluctuations in the conformation tensor. The logarithmic volume ratio, $\zeta$, is positive (negative) for volumetric expansions (contractions) with respect to the mean. However, $\zeta=0$ does not necessarily imply no deformation, since  $\text{det}(\text{det}(\mathsfbi{G})\,\mathsfbi{A})=\text{det}(\mathsfbi{G})$ for all $\mathsfbi{A}$ with unit determinant. The squared geodesic distance to the identity, $\kappa$, helps  distinguish such cases since $\kappa=0$ only when $\mathsfbi{G}=\mathsfbi{I}$ ($\mathsfbi{C}=\mathsfbi{\overline{C}}$). Finally, the anisotropy index, $\xi$, provides a quantification of the deviation of the shape of the polymer from the shape of the mean conformation tensor because it is a measure of the distance from $\mathsfbi{G}$ to the closest isotropic tensor, or equivalently, the minimizing distance between $\mathsfbi{C}$ and $a\mathsfbi{\overline{C}}$ over all $a>0$.

We next derive evolution equations for the scalar measures presented above and for the particular case when $\mathsfbi{\overline{C}} = \langle \mathsfbi{C} \rangle$.

\subsection{Evolution equations for $\zeta$, $\kappa$ and $\xi$}
\label{REScalars}

 Since $\zeta$, $\kappa$ and $\xi$ are scalar characterizations of $\mathsfbi{G}$, one need only evolve $\mathsfbi{G}$ (or equivalently, $\mathsfbi{C}$) to obtain the field-valued $\zeta$, $\kappa$ and $\xi$. Nevertheless, it is of interest to  mathematically evaluate the evolution equations of these scalar measures separately in order to find the quantities that contribute to their dynamics. 

Using (\ref{GEvolEqn}) and the relationship
$\tr\,\bld{\mathcal{G}}^n = \sum_{i=1}^3 (\log \Gamma_i)^n$,   we can derive the following equations for the fluctuating scalar measures 
\begin{align}
\dd{\zeta}{t}
=   \tr \,  \bld{\mathcal{D}}, \qquad
\frac{1}{2}\dd{\kappa}{t}
 = 
 \tr\, (
 \bld{\mathcal{D}}\bcdot \bld{\mathcal{G}} ), \qquad
\frac{1}{2}\dd{\xi}{t}
= 
\tr\, ( \bld{\mathcal{D}} \bcdot \text{dev}\,\bld{\mathcal{G}}) \label{scalarGovEqns}
\end{align} 
where $\text{dev}\,\bld{\mathcal{G}}=\bld{\mathcal{G}} -  (\tr\, \bld{\mathcal{G}}/3) \mathsfbi{I}$ is the deviatoric part of $\bld{\mathcal{G}}$, and $\bld{\mathcal{D}}$ is defined as
\begin{align}
 \bld{\mathcal{D}} &\equiv 2\,\text{sym}\, \mathsfbi{K}  -  \mathsfbi{M}\bcdot\text{e}^{-\bld{\mathcal{G}}}.
\end{align} 
The derivation of (\ref{scalarGovEqns}), omitted here for brevity here, closely follows the procedure used by \citet{Vaithianathan2003} to obtain evolution equations for the continuous eigendecomposition of $\mathsfbi{C}$. 

The Cauchy-Schwarz inequality can be used to show that
\begin{align} 
\left|  \dd{ \sqrt{\kappa} }{t} \right|
\leq  \| \bld{\mathcal{D}} \|_{F}   , \qquad
  \left| \dd{ \sqrt{\xi} }{t} \right|
\leq  
\| \bld{\mathcal{D}} \|_{F}. \label{scalarUpperBnds}
\end{align} 
The bounds in (\ref{scalarUpperBnds}) illustrate the role of the stretching and relaxation balance, $\bld{\mathcal{D}}$, in bounding the growth of $\kappa$ and $\xi$.
In the Reynolds-filtered case, $\bld{\mathcal{D}}$ can be simplified using (\ref{DHbarHbarInv}) so that
\begin{multline}
\bld{\mathcal{D}}  =
2\,\text{sym}\, \left( \mathsfbi{E}'- \langle \mathsfbi{G} \bcdot  \mathsfbi{E}'\rangle 
 -      
 (\bld{u}' 
- \langle \mathsfbi{G}\bcdot\bld{u}'\rangle )\bcdot \bnabla   \mathsfbi{\overline{F}}^{\mathsf{T}} \bcdot \mathsfbi{\overline{F}}^{-\mathsf{T}} 
\right) 
  + 
\langle \bld{u}'\bcdot \bnabla \mathsfbi{G} \rangle   \\
-  \left( \mathsfbi{M}\bcdot\mathsfbi{G}^{-1}
-  \mathsfbi{\overline{M}}  \right) \label{Dsimplified}
\end{multline} 
which shows that the turbulence intensity of the fluctuating conformation tensor, as measured by $\kappa$, is not directly affected by the mean velocity gradient tensor $\bnabla\bld{\overline{u}}$. The contribution of $\bnabla\bld{\overline{u}}$ to $\kappa$ is captured indirectly through $\mathsfbi{\overline{F}}$, which is determined based on the mean balance.
 
According to (\ref{Dsimplified}), the tensor $\bld{\mathcal{D}}$ consists of a stretching component: $2\,\text{sym}\, \left( \mathsfbi{E}'- \langle \mathsfbi{G} \bcdot  \mathsfbi{E}'\rangle \right)$, a component that arises due to gradients in $\mathsfbi{\overline{F}}$ and represents modified advection of $\mathsfbi{\overline{F}}$:$ - 2\,\text{sym}\,\left[(\bld{u}' - \langle \mathsfbi{G}\bcdot\bld{u}'\rangle )\bcdot \bnabla   \mathsfbi{\overline{F}}^{\mathsf{T}} \bcdot \mathsfbi{\overline{F}}^{-\mathsf{T}}\right]$, a component that comprises mean advection of $\mathsfbi{G}$ by the fluctuating velocity field: $ \langle \bld{u}'\bcdot \bnabla \mathsfbi{G} \rangle$, and finally a component that resembles a fluctuating relaxation contribution: $-\left( \mathsfbi{M}\bcdot\mathsfbi{G}^{-1}
-  \mathsfbi{\overline{M}}  \right)$.

\subsubsection{Reynolds-filtering the evolution equations}
As a first-order statistical characterization of the fluctuating quantities, $\zeta$ and $\kappa$, we will consider their filtered or averaged values,
\begin{align}
\overline{\zeta} \equiv \langle \zeta \rangle, \qquad
\overline{\kappa} \equiv  \langle \kappa \rangle, \qquad
\overline{\xi}  \equiv \langle \xi \rangle.
\end{align}
Reynolds-filtering (\ref{scalarGovEqns}) and using the expression (\ref{Dsimplified}), we obtain the filtered evolution equations for $\zeta$, $\kappa$ and $\xi$.  
The filtered equation for $\zeta$ is given by
\begin{multline}
\left\langle\dd{\zeta}{t}\right\rangle    =
 -2\,\text{tr}\,\text{sym}\, \left( \langle \mathsfbi{G} \bcdot  \mathsfbi{E}'\rangle   -  
  \langle \mathsfbi{G}\bcdot\bld{u}'\rangle \bcdot \bnabla   \mathsfbi{\overline{F}}^{\mathsf{T}}\bcdot  \mathsfbi{\overline{F}}^{-\mathsf{T}}  \right)   \\
-\text{tr}\,\left(
-  \langle \bld{u}'\bcdot \bnabla \mathsfbi{G} \rangle  
+   \langle \mathsfbi{M}\bcdot\mathsfbi{G}^{-1}\rangle
-  \mathsfbi{\overline{M}}   \right)
\end{multline} 
where each term can be compared to those in (\ref{Dsimplified}) that were described in the previous subsection. Similarly, the filtered equation for $\kappa$ is given by 
\begin{multline}
\frac{1}{2}\left\langle\dd{\kappa}{t}\right\rangle   =
2\, \text{tr}\,\left\langle \text{sym}\, \left( \mathsfbi{E}'  
 -  \mathsfbi{\overline{F}}^{-1}\bcdot\bld{u}'\bcdot\bnabla
 \mathsfbi{\overline{F}} \right)\bcdot \bld{\mathcal{G}}  \right\rangle
-  \text{tr}\,\left\langle \mathsfbi{M}\bcdot\mathsfbi{G}^{-1} \bcdot \bld{\mathcal{G}} \right\rangle
\\
- \text{tr}\,\left\{ \left[
2\,\text{sym}\, \left( \langle \mathsfbi{G} \bcdot  \mathsfbi{E}'\rangle 
- 
  \langle \mathsfbi{G}\bcdot \bld{u}' \rangle \bcdot\bnabla\mathsfbi{\overline{F}}^{\mathsf{T}}\bcdot    \mathsfbi{\overline{F}}^{-\mathsf{T}} 
\right)  
 - \langle \bld{u}'\bcdot \bnabla \mathsfbi{G} \rangle   -\mathsfbi{\overline{M}} \right] 
\bcdot \langle \bld{\mathcal{G}}   \rangle  \right\}.
\end{multline} 
The equation for $\xi$, which we omit here for brevity, can be similarly derived.

\section{Case study: viscoelastic turbulent channel flow}
\label{sec:DNSchannel}

The general framework we have developed can be applied to a variety of flows. We focus on the classical problem of viscoelastic turbulent channel flow as a case study and use direct numerical simulations (DNS) to investigate the turbulent dynamics.  The algorithmic details of the simulation are identical to that of \citet{Lee2017} with the exception of the treatment of the conformation tensor which is documented in the appendix for the interested reader. The code employed was validated against linear growth rates of Tollmien-Schlichting waves  \citep[see][for a study of natural transition in viscoelastic flows]{Lee2017}  and also against the results of \citet{Agarwal2014}  for the evolution of a localized disturbance.

We define $x$, $y$ and $z$ as the streamwise, wall-normal and spanwise coordinates, respectively. The flow is homogeneous in $x$ and $z$ and all the coordinates are normalized with respect to the channel half-height, with the channel walls located at $y = \pm 1$. 
The parameters for the calculation are listed in Table \ref{tab:SimParams}. The Reynolds number, $\Rey$, is defined based on the channel half-height and bulk velocity while $\Rey_{\tau}$ is the friction Reynolds number defined based on the friction velocity, calculated using the slope of the mean velocity at the wall, and channel half-height.
The flow is driven by a pressure gradient which is adjusted in time to maintain a constant mass flow-rate. The symbol $\langle \cdot \rangle$ denotes averaging over $x$, $z$ and $t$. Therefore, all of the averaged quantities are functions of only $y$.

The computational grid is uniform in the $(x,z)$ directions and employs hyperbolic tangent stretching in the $y$-direction with a Planck taper \citep{McKechan2010} applied such that grid spacing very close to the wall is constant. The maximum change of the grid   spacing in the $y$-direction is less than $3$\% throughout in the domain. The resolution in friction units is listed in Table \ref{tab:SimParams}.  The initial turbulent state was generated from a separate simulation that followed the evolution of a Tollmien-Schlichting wave to the fully turbulent state \citep{Lee2017}. A snapshot from the fully turbulent state of \citet{Lee2017} was used as an initial condition and first run  for at least 150 convective time units before any statistics were collected. The evolution of the friction Reynolds number, $\Rey_{\tau}$, was used to check whether the simulation had reached a statistically stationary state.

\begin{table}
\begin{center}
\begin{tabular}{ccccccccc}
  &   &   &   &   & 
Domain size & Grid size & Time step 
& Spatial resolution \\
$\Rey$ & $\Rey_{\tau}$ & $\Wie$ & $L_{\max}$ & $\beta$ & 
$L_x\times L_y \times L_z$ & $N_x\times N_y \times N_z$ & $\Delta t$ 
& $\Delta_x^+ \times\Delta_y^+ \times\Delta_z^+$ \\
4667 & 180 & 6.67 & 100 & 0.9 & $4\pi \times 2 \times 4\pi$ &
$512 \times 400 \times 512$ &  $2.5 \times 10^{-3}$ & $4.42\times[0.13,1.90]\times 4.42$ 
\end{tabular}
\end{center}

\caption{Parameters of the simulation of viscoelastic turbulent channel flow.
The length scale is the channel half-height and velocity scale is the bulk flow speed. In the $p$-th directon, the size of the domain is $L_p$, the number of grid points is $N_p$, and the spatial resolution, in friction units, is $\Delta_p^+$.}
\label{tab:SimParams}
\end{table}

\subsection{Mean profiles and comparisons with the laminar profiles}
\begin{figure}
\centering
\includegraphics[draft=false,width=0.45\textwidth]{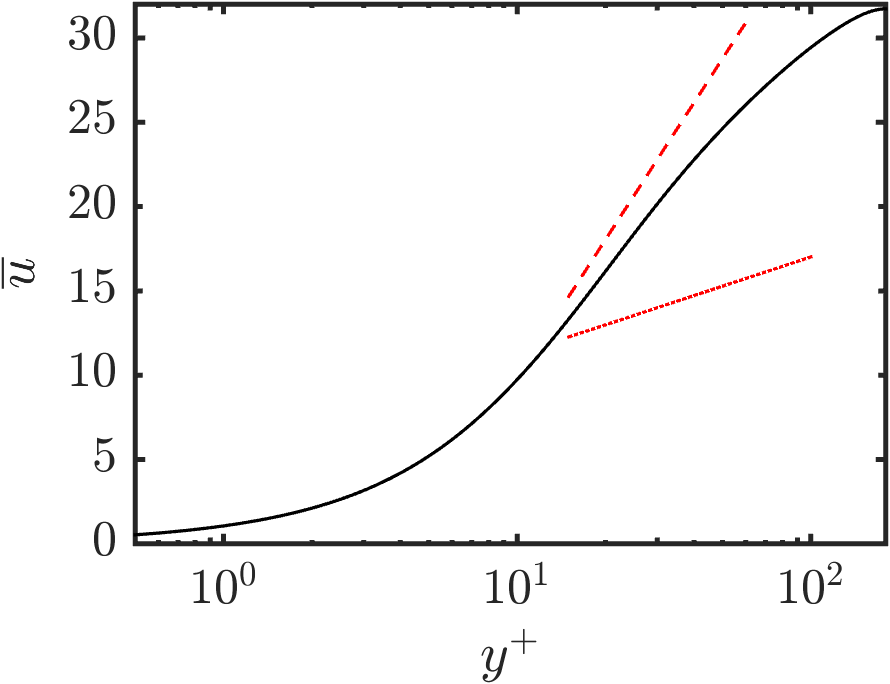}
\caption{Mean velocity profile from a FENE-P drag-reduced channel flow simulation.  The solid
line (\linesolids) is the mean streamwise velocity, the red dotted (lower) line ({\color{red}\linedotted}) 
is the von K\'{a}rm\'{a}n log-law, $\overline{u}^+_{\text{von K\'{a}rm\'{a}n}} = 2.5y^+ + 5.5$,
and the red dashed (upper) line ({\color{red}\linedashed}) is Virk's asymptote, $\overline{u}^+_{\text{Virk}} = 11.7y^+-17.0$. }
\label{fig:u-mean-profile}
\end{figure}

\begin{figure}
\hspace{0.6in} (a) \hspace{1.8in} (b)

\centering
\includegraphics[draft=false,width=0.75\textwidth]{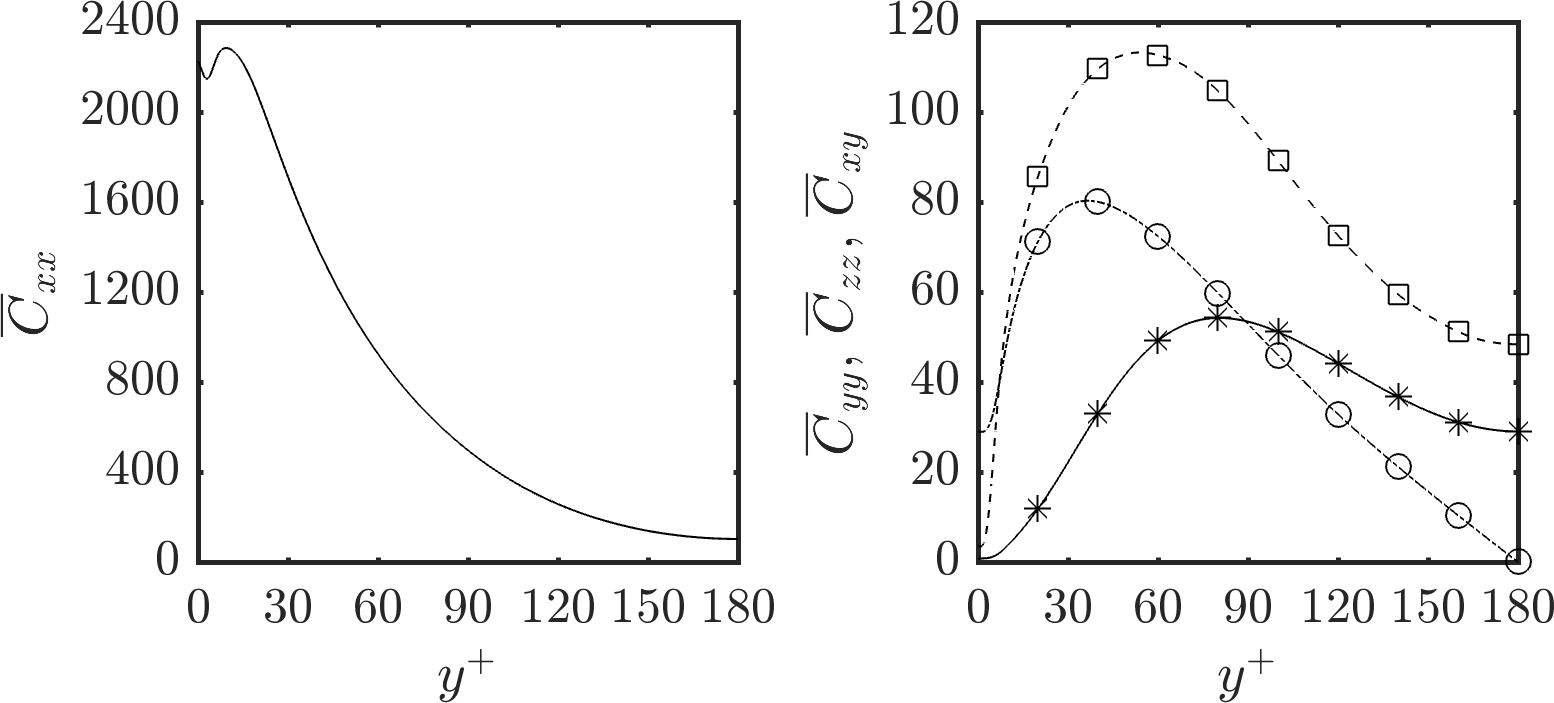}
\caption{Mean conformation tensor profiles from a FENE-P drag-reduced channel flow simulation. (a) $\mathsfi{\overline{C}}_{xx}$ (b) the solid line with star symbols (\linesolidsSym) is $\mathsfi{\overline{C}}_{yy}$, the dashed line with square symbols  (\linedashedSym) is $\mathsfi{\overline{C}}_{zz}$, and the dashed-dot line with circle symbols
 (\linedshdotSym) is $\mathsfi{\overline{C}}_{xy}$. The remaining components of the mean conformation tensor are $0$. Note that the symbols in (b) are identifiers and are thus only a small subset of all the data points used in the line plots. } 
\label{fig:C-mean-profiles}
\end{figure} 

The statistics presented in this section were obtained by averaging in space and over $750$ time units, and by exploiting the symmetry of the flow about the centreline. Halving the number of samples maintained the trends and caused only minor deviations in the statistics, with no impact on the conclusions.

The mean streamwise velocity profile is shown in figure \ref{fig:u-mean-profile} as a function of $y^+=\Rey_{\tau}(y+1)$, where $\Rey_{\tau}$ is always taken to be the turbulent frictional Reynolds number given in Table \ref{tab:SimParams}. Also shown are the 
von K\'{a}rm\'{a}n log-law and Virk's maximum drag reduction asymptote. The mean velocity lies 
in between these two lines, indicating a drag-reduced state. Using Dean's correlation
for the skin-friction \citep{Dean1978}, we obtain a friction Reynolds number of approximately $284$
for a Newtonian flow with $\Rey = 4667$ and thus the drag reduction percentage is  
\begin{align}
\text{DR}\% \equiv \left[1 - \left(\frac{\Rey_{\tau}}{\Rey_{\tau}|_{\text{Newtonian}}}\right)^2\right]\times 100 = 59.8\%.
\end{align} 

The non-zero components of $\langle\mathsfbi{C}\rangle$  calculated for the same parameter values, are shown in figure
\ref{fig:C-mean-profiles}. All the components of $\langle\mathsfbi{C}\rangle$ are even functions of $y$
except $\langle \mathsfi{C}_{xy} \rangle$, which is an odd function of $y$. The streamwise stretch $\langle \mathsfi{C}_{xx} \rangle$ is four times larger than the laminar case (not shown) near the wall. It is also an order of
magnitude larger than $\langle \mathsfi{C}_{yy} \rangle$ and $\langle \mathsfi{C}_{zz} \rangle$. 
The remaining normal components of the conformation tensor are also larger in the turbulent case than the laminar:  figure \ref{fig:C-mean-profiles} shows that
$\max_y \langle \mathsfi{C}_{yy} \rangle \approx 45$ and $\max_y \langle \mathsfi{C}_{zz} \rangle \approx 120$, while
$\mathsfi{C}_{yy}=\mathsfi{C}_{zz}\approx 1$ throughout the channel when the flow is laminar. The peak values of each
of the components $\langle \mathsfi{C}_{xx} \rangle$, $\langle \mathsfi{C}_{yy} \rangle$ and $\langle \mathsfi{C}_{zz} \rangle$
occur at different locations in the channel. Figure \ref{fig:C-mean-profiles} also shows that
$\langle \mathsfi{C}_{zz} \rangle \geq \langle \mathsfi{C}_{yy} \rangle$ throughout the channel. 
The trends above are consistent with those reported in the literature \citep{Dallas2010}.

\begin{figure}
\hspace{0.6in} (a) \hspace{1.8in} (b)

\centering
\includegraphics[draft=false,width=0.75\textwidth]{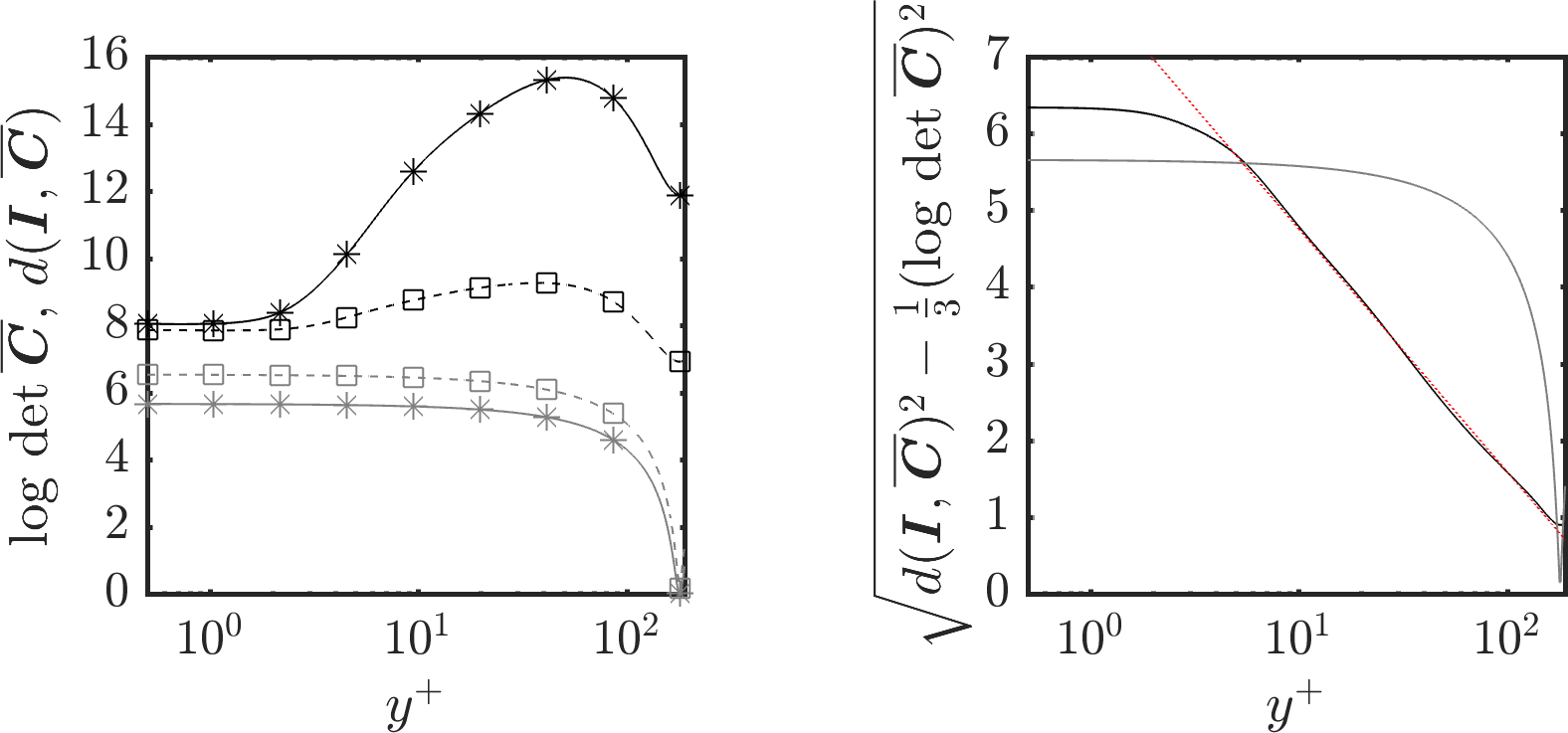}
\caption{Scalar measures applied to nominal conformation tensors, in equivalent dimensions, plotted as functions of $y^+$. (a) The solid line with star symbols (\linesolidsSym) is $\log\,\det\,\mathsfbi{\overline{C}}$, the logarithmic volume;
the dashed line with square symbols (\linedashedSym) is $d(\mathsfbi{I},\mathsfbi{\overline{C}})$, the geodesic distance between $\mathsfbi{\overline{C}}$ and $\mathsfbi{I}$ on $\Pos_3$, (b) anisotropy index, $\sqrt{(d(\mathsfbi{I},\mathsfbi{\overline{C}}))^2 - \frac{1}{3} (\log\,\det\,\mathsfbi{\overline{C}})^2}$, which is the geodesic distance from the closest isotropic tensor.
For both (a) and (b), black lines are for $\mathsfbi{\overline{C}} = \langle \mathsfbi{C} \rangle$ and grey lines are when 
$\mathsfbi{\overline{C}}$ is equal to the FENE-P laminar conformation tensor. The red dotted line ({\color{red}\linedotted}) in (b) is 
$-1.375\log\,y^+ + 7.925$. Note that the symbols in (a) are identifiers and are thus only a small subset of all the data points used in the line plots.}
\label{fig:lgCbar-profiles}
\end{figure}

Figure \ref{fig:lgCbar-profiles}(a) shows the logarithmic volume of the mean and laminar conformation tensors along with the distance from the origin on the manifold of positive-definite tensors. 
The figure shows that both the logarithmic volume,
$\log\,\det\,\mathsfbi{\overline{C}}$, and the distance from the origin, 
$d(\mathsfbi{I}, \mathsfbi{\overline{C}})$ are larger in the turbulent case compared to the laminar.
Furthermore, these two quantities are monotonically decreasing in the laminar case but have peaks around $y^+ \approx 60$ in the turbulent case. Both quantities in the two cases asymptote to a constant at locations very close to the wall, $y^+ \leq 2$. 
The weak growth in $d(\mathsfbi{I}, \mathsfbi{\overline{C}})$ in the turbulent case despite a rapid increase in $\log\,\det\,\mathsfbi{\overline{C}}$ is due to the increase in isotropy (sphericity) as we move away from the wall, since for a given volume the tensor closest to $\mathsfbi{I}$ is an isotropic tensor. In figure \ref{fig:C-mean-profiles} we see that the mean normal stretches in the $y$ and $z$
directions in the turbulent case peak between $y^+ \approx 40$ and $y^+ \approx 70$ and
are at least an order of magnitude larger than the stretches in the laminar case, where $\mathsfi{\overline{C}}_{yy} = \mathsfi{\overline{C}}_{zz} \approx 1$. The term $\mathsfi{\overline{C}}_{xx}$
is decreasing towards the channel centre in both the turbulent and laminar case but is accompanied by an increase in $\mathsfi{\overline{C}}_{yy}$,
$\mathsfi{\overline{C}}_{zz}$ in the turbulent flow which leads to increased isotropy for locations sufficiently removed from the wall. 

Figure \ref{fig:lgCbar-profiles}(b) shows the anisotropy, the geodesic distance to the closest isotropic tensor, of the mean and laminar conformation tensors. 
The anisotropy index is approximately constant in the vicinity of the wall for both the
laminar and turbulent cases, and decays away from the wall. In the turbulent flow, the decay starts very close to the wall --- approximately three friction units away from the wall, and then shows a remarkable logarithmic decay that proceeds all the way to very close to the centreline where it sharply turns and forms a stationary point.  The increased isotropy in the turbulent case may be explained by the
fact that, although more stretching occurs in this case, the stretching in the cross-stream
directions is much larger than in the laminar case. Overall, this leads to a more isotropic mean conformation tensor.

\subsection{Invariants of the fluctuating conformation tensor}

\begin{figure} 
\hspace{0.8in} (a) \hspace{1.6in} (b)

\begin{center}
\includegraphics[draft=false,width=0.8\textwidth]{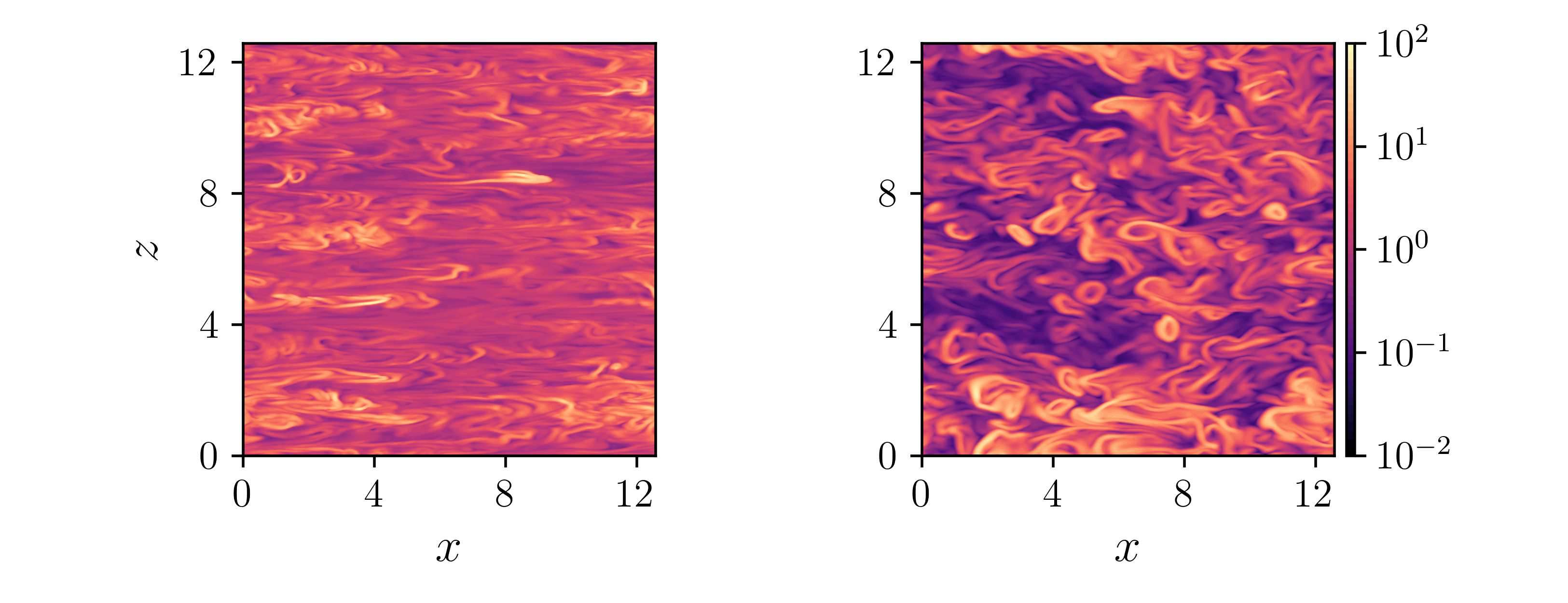} 
\end{center} 

\caption{Wall-parallel $(x,z)$ planes of isocontours of instantaneous $\I{\mathsfbi{G}} = \text{tr}\,\mathsfbi{G}$.
(a) at $y^+=15$ and (b) $y^+=180$ (centreline).}
\label{fig:rho-snapshots}
\end{figure}

In order to motivate the scalar measures proposed in the present work, we consider the invariants of $\mathsfbi{G}$ as alternatives in this subsection.
Figure \ref{fig:rho-snapshots} shows isocontours of instantaneous $\I{\mathsfbi{G}}$ at a given time at two wall-parallel planes, $y^+ = 15$ and $y^+ = 180$ (centreline). 
The isocontours of instantaneous $\II{\mathsfbi{G}}$ and $\III{\mathsfbi{G}}$ are qualitatively similar to those of $\I{\mathsfbi{G}}$ and are thus not shown here. The instantaneous $\I{\mathsfbi{G}}$ can vary over several orders of magnitude. As a result, obtaining reliable statistics for the
invariants is challenging. We found that the peak
root-mean-square (RMS) of the invariants (not shown) are at least an order of magnitude larger than the corresponding mean values.
This large spread in the instantaneous invariants of $\mathsfbi{G}$ suggests that $\log\,\mathsfbi{G}$ is a more
appropriate quantity to consider, and reinforces the need for the geometrically consistent scalar measures introduced in
\S \ref{sec:geodesicScalars}.

\subsection{The scalar measures: $\zeta$, $\kappa$ and $\xi$}

\begin{figure}

\hspace{0.8in} (a) \hspace{1.6in} (b)

\begin{center}
\hspace{0.1in}\includegraphics[draft=false,width=0.8\textwidth]{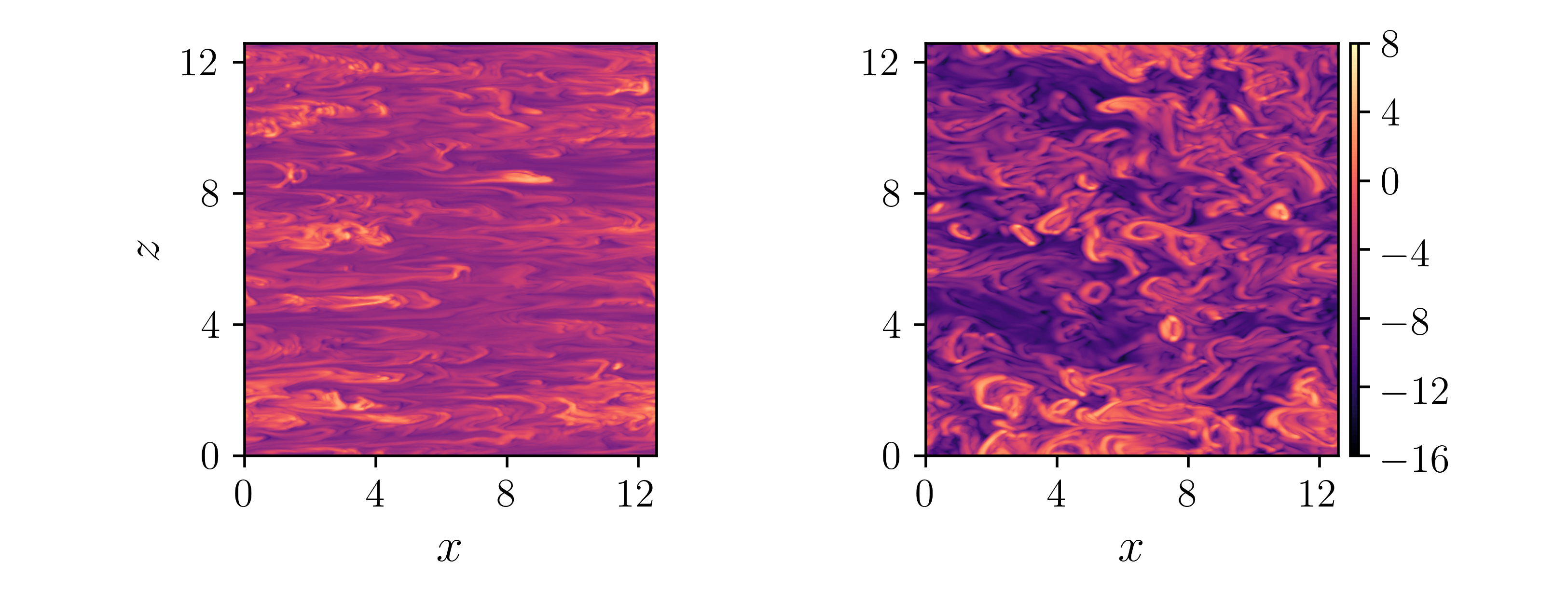}
\end{center}

\hspace{0.8in} (c) \hspace{1.6in} (d) 

\begin{center}
\includegraphics[draft=false,width=0.8\textwidth]{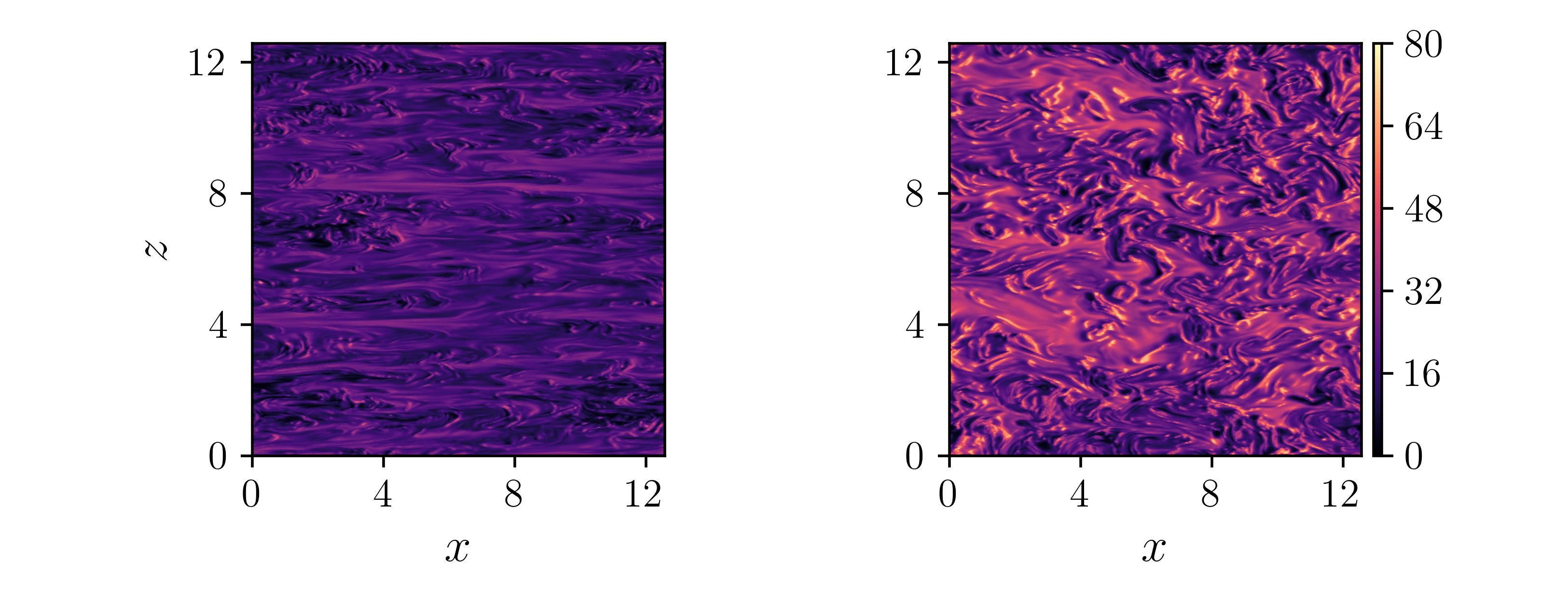}
\end{center}

\hspace{0.8in} (e) \hspace{1.6in} (f)

\begin{center}
\hspace{0.in}\includegraphics[draft=false,width=0.8\textwidth]{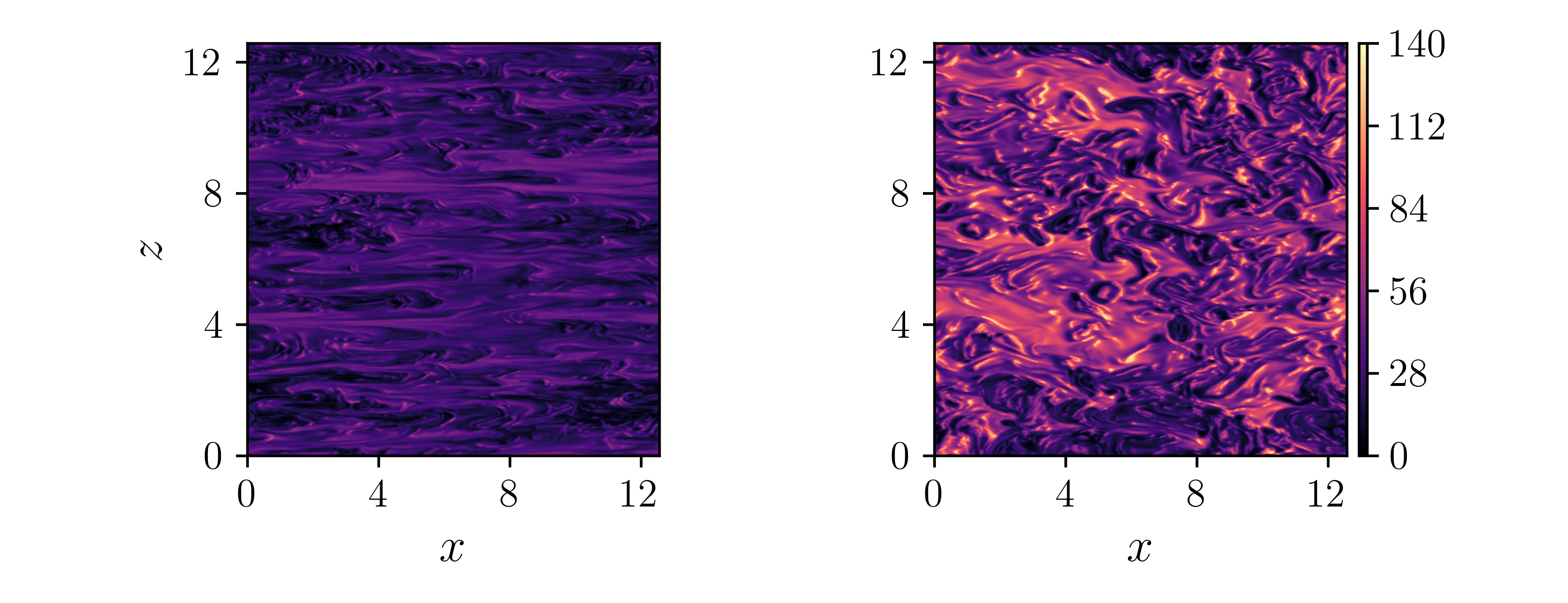}
\end{center}

\caption{Wall-parallel $(x,z)$ planes of isocontours of instantaneous (a)--(b): logarithmic volume ratio, $\zeta$, (c)--(d):  geodesic distance from the identity, $\kappa$, and (e)--(f): anisotropy index, $\xi$.
(a),(c), and (e): $y^+=15$. (b), (d), and (f): $y^+=180$ (centreline).}
\label{fig:metrics-snapshots}
\end{figure}

Figure \ref{fig:metrics-snapshots} shows isocontours of instantaneous values of $\zeta$, $\kappa$ and $\xi$ for two wall-parallel planes, $y^+ = 15$ and $y^+ = 180$. As a comparison, the fluctuating tensor $\mathsfi{C}_{xx}'$ obtained by the Reynolds decomposition (\ref{CReyDecomp}) and normalized by the local $\mathsfi{\overline{C}}_{xx}$ is shown in figure \ref{fig:Cxxp-snapshots}. 
We normalized  $\mathsfi{C}_{xx}'$ so that the fluctuations near the wall could be compared to those at the centreline, since $\mathsfi{\overline{C}}_{xx}$ differs by an order of magnitude between the two locations.
The isocontours of $\mathsfi{C}_{xx}'/\mathsfi{\overline{C}}_{xx}$, and in particular the negative values, are difficult to interpret since the correspondence to a physical deformation or a mathematical metric is unclear.   

Figures \ref{fig:metrics-snapshots}(a)--(b) show the logarithmic volume ratio, $\zeta$. This quantity is the logarithm of $\III{\mathsfbi{G}}$, which itself is qualitatively similar to $\I{\mathsfbi{G}}$ and hence we observe a strong visual resemblance between figures \ref{fig:rho-snapshots} and \ref{fig:metrics-snapshots}(a)--(b).  The colour scale in the former is logarithmic and is thus consistent with the linear scale in figure \ref{fig:metrics-snapshots}. Both figure \ref{fig:metrics-snapshots}(a) and (b) have predominantly negative values, indicating that the instantaneous volume is smaller than the volume of the mean conformation.  We also find regions of very high $\zeta$ adjacent to regions of very low values, especially at the centreline. This is partially a result of the lack of diffusion in the polymers since there is no direct mechanism for smoothing out shocks in the conformation tensor field. 

\begin{figure}
	
	\hspace{0.8in} (a) \hspace{1.6in} (b)

	\begin{center}
		\includegraphics[draft=false,width=0.8\textwidth]{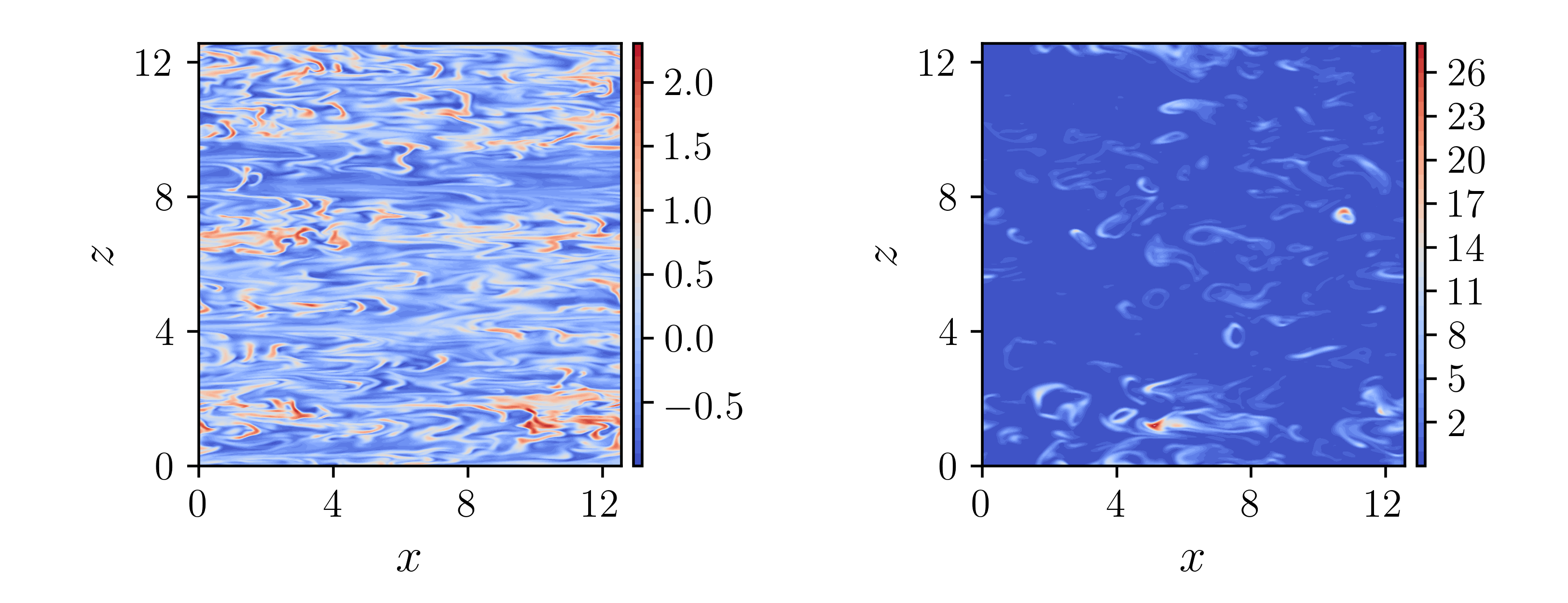}
	\end{center} 
	
	\caption{Wall-parallel $(x,z)$ planes of isocontours of instantaneous $\mathsfi{C}_{xx}'/\mathsfi{\overline{C}}_{xx}$ at (a) $y^+=15$ (b) $y^+=180$ (centreline), where $\mathsfi{\overline{C}}_{xx}(y^+=15) = 2.22 \times 10^3$ and $\mathsfi{\overline{C}}_{xx}(y^+=180) = 1.02 \times 10^2$. The limits of the  divergent colour map are set at the planar maxima and minima of $\mathsfi{C}_{xx}'/\mathsfi{\overline{C}}_{xx}$.}
	\label{fig:Cxxp-snapshots}
\end{figure}

Another important effect is that of memory: polymers are stretched near the wall where the shear is significant, and are then transported out to the centreline. If the the half-channel transit time, the ratio of the channel half-height to the RMS wall-normal velocity, is smaller than the polymer relaxation time, we expect to observe a footprint of the near-wall stretching all the way out to the centreline. In the present case, the relaxation time is two to three times the half-channel transit time. Since material points that are initially close are exponentially diverging in a turbulent flow \citep[see][for a study of Lagrangian stretching in Newtonian turbulent channel flow]{Johnson2017}, it is unsurprising to find adjacent regions of strongly and weakly stretched polymers. 

The reductionist explanation given above is useful for a basic understanding but is insufficient to account for other observed features of the flow. For example, the present considerations would suggest that the streamwise elongated shape of the isocontours of $\zeta$ near the wall would lead to a similar shape at the centreline. However, this is manifestly not the case. Instead, the $\zeta$ field appears to generate, on the whole, highly curved isocontours at the centre of the channel.

The measure $\zeta$ does not distinguish between volume-preserving deformations. For example, $\zeta$ does not distinguish between $\mathsfbi{C}$ and $(\det\,\mathsfbi{C})\mathsfbi{A}$ for \emph{any} $\mathsfbi{A}$ with determinant $= 1$. In particular, $\zeta = 0$, does not imply  $\mathsfbi{C} = \mathsfbi{\overline{C}}$. In order to identify regions where $\mathsfbi{C} = \mathsfbi{\overline{C}}$ is true, and quantify the deviation when it is not, we use the squared distance away from the origin ($\mathsfbi{I}$) along the manifold, $\kappa$. Figure \ref{fig:metrics-snapshots}(c)--(d) shows isocontours of instantaneous $\kappa$. Most of the conformation tensor field is significantly far away, in the sense of distance along the manifold, from $\mathsfbi{\overline{C}}$. However, regions where $\mathsfbi{\overline{C}}$ is a good representation of $\mathsfbi{C}$ are interspersed between regions where $\kappa$ is large. This behaviour is true both in the near-wall region as well as the channel centre but more so in the latter. In contrast, it is well-known that kinetic energy fluctuations are weakest at the centreline in a Newtonian channel flow. A different behaviour for the polymers is unsurprising since, due to the strong memory effect, $\mathsfbi{C}$ at each point is strongly dependent on the Lagrangian path that is obtained by a pull-back of the particular Eulerian point of interest.

Figures \ref{fig:metrics-snapshots}(e)--(f) show isocontours of instantaneous $\xi$, the anisotropy index. This index shows how close the shape of instantaneous conformation tensor is to the shape of the mean conformation tensor, irrespective of volumetric changes. The visual resemblance of $\kappa$ and $\xi$ suggests that deformations to the mean conformation are largely anisotropic, or in other words, lead to shape change.

\begin{figure} 

\centering
\includegraphics[draft=false,width=0.45\textwidth]{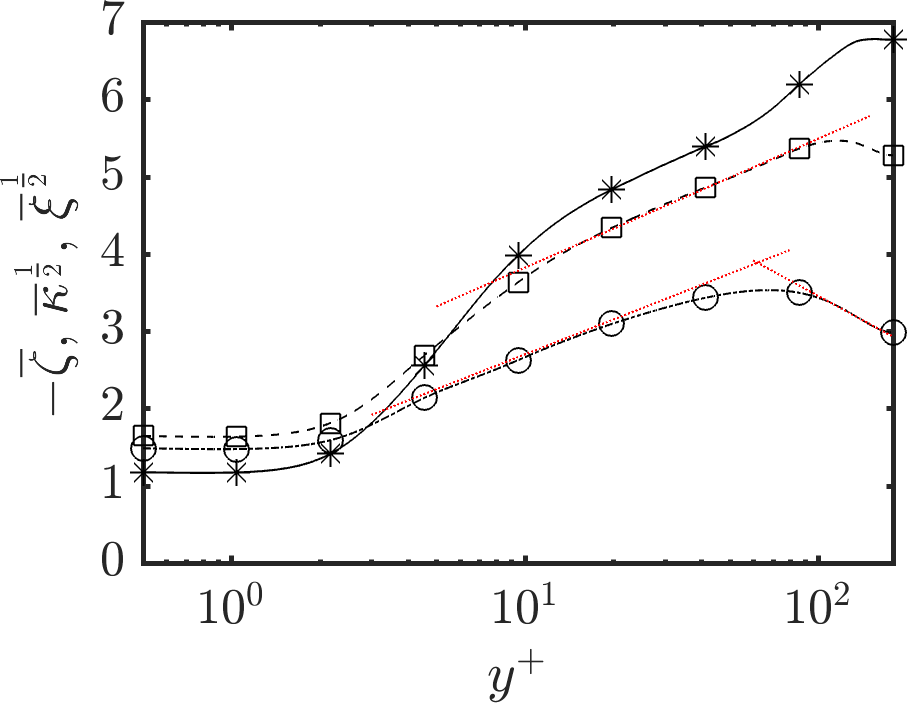}
\caption{
Mean scalar measures, plotted in equivalent dimensions, as functions of $y^+$. The solid line with star symbols (\linesolidsSym)  is minus the mean volume ratio: $-\overline{\zeta} = -\langle \zeta \rangle$, the dashed line  with square symbols (\linedashedSym) is the square-root of the mean geodesic distance from the identity: $\overline{\kappa}^{\frac{1}{2}} = \sqrt{\langle \kappa \rangle}$, the dashed-dot line with circle symbols (\linedshdotSym) is the square-root of the mean anisotropy index: $\overline{\xi}^{\frac{1}{2}} = \sqrt{\langle \xi \rangle}$. 
Red dotted lines ({\color{red}\linedotted}) are logarithmic fits to the profiles: the fit to the $\overline{\kappa}^{\frac{1}{2}}$ profile (\linedashedSym)  is given by $0.725\log\,y^+ +2.15$, the fits to the $\overline{\xi}^{\frac{1}{2}}$ profile (\linedshdotSym) are given
by $0.65\log\,y^+ + 1.20$ and $-0.9\log\,y^+ + 7.6$. Note that the symbols are identifiers and are thus only a small subset of all the data points used in the line plots.}
\label{fig:metrics-profiles}
\end{figure}

Figure \ref{fig:metrics-profiles} shows the mean  values of $\zeta$ , $\kappa$ and $\xi$ in dimensions of distance along the manifold. These statistics were generated using 225 convective time units. We checked for convergence by halving the number of samples. This process led to only minor deviations in the results, with no material significance to the discussion that follows. As was inferred earlier, the average logarithmic volume ratio is negative throughout the channel and is monotonically decreasing towards the centreline where it becomes roughly constant, similar to the behaviour very near the wall  $y^+ \lesssim 2$. This behaviour of the mean logarithmic volume being smaller than the
volume of the mean is consistent with the `swelling' problem
associated with the arithmetic mean of positive-definite tensors that
has been previously reported in the literature \citep{Arsigny2007}. It also suggests that, although an arithmetic mean of the conformation tensor may be unavoidable for modelling in the averaged equations, it may not be the most representative conformation tensor for deducing the most likely physical deformation of the polymers.
A more extensive study, beyond the scope of the present work, is required to determine better alternatives to the arithmetic mean.

The square-root mean squared distance from the origin along the manifold, $\overline{\kappa}^{\frac{1}{2}}$, is logarithmically increasing up to close to the centreline but peaks at $y^+ \approx 100$. The anisotropy, $\overline{\xi}^{\frac{1}{2}}$, shows logarithmic increase over a small range $3 \lesssim y^+ \lesssim 20$, peaking at $y^+ \approx 60$, but then shows a logarithmic decrease towards the centreline. The logarithmic behaviour in these quantities, especially in $\overline{\kappa}^{\frac{1}{2}}$ where the behaviour extends over a significant range, resembles the behaviour that appears in the mean velocity as well as in the statistical moments and two-point correlations of the velocity fluctuations in wall-bounded shear flows \citep{Meneveau2013,Yang2016}.

If we momentarily accept the simplified picture of polymers being deformed closer to the wall in an `active' region and then passively transported out to the centreline of the channel, then these results indicate that the active region of the channel exists all the way up to $y^+\approx 100$. Beyond this region, the stretching of polymers weakens and thus $\overline{\kappa}^{\frac{1}{2}}$ decreases monotonically. The active region involves a region of logarithmic increase, and so we can write $\overline{\kappa}^{\frac{1}{2}} = a_1 \int y^{-1}\,\text{d}y + a_0$ for constants $a_1$ and $a_0$. If the stretching at each wall-normal station is actually additive, the polymers are undergo deformation that on average leads to a diminishing increment in the deformation. This behaviour is consistent with the velocity gradients weakening with wall-normal distance. At $y^+\approx 100$, the velocity gradients then either weaken or act on such long time scales that polymers relax quickly enough not to retain any additional deformation.  

\begin{figure}
\centering
\includegraphics[draft=false,width=0.8\textwidth]{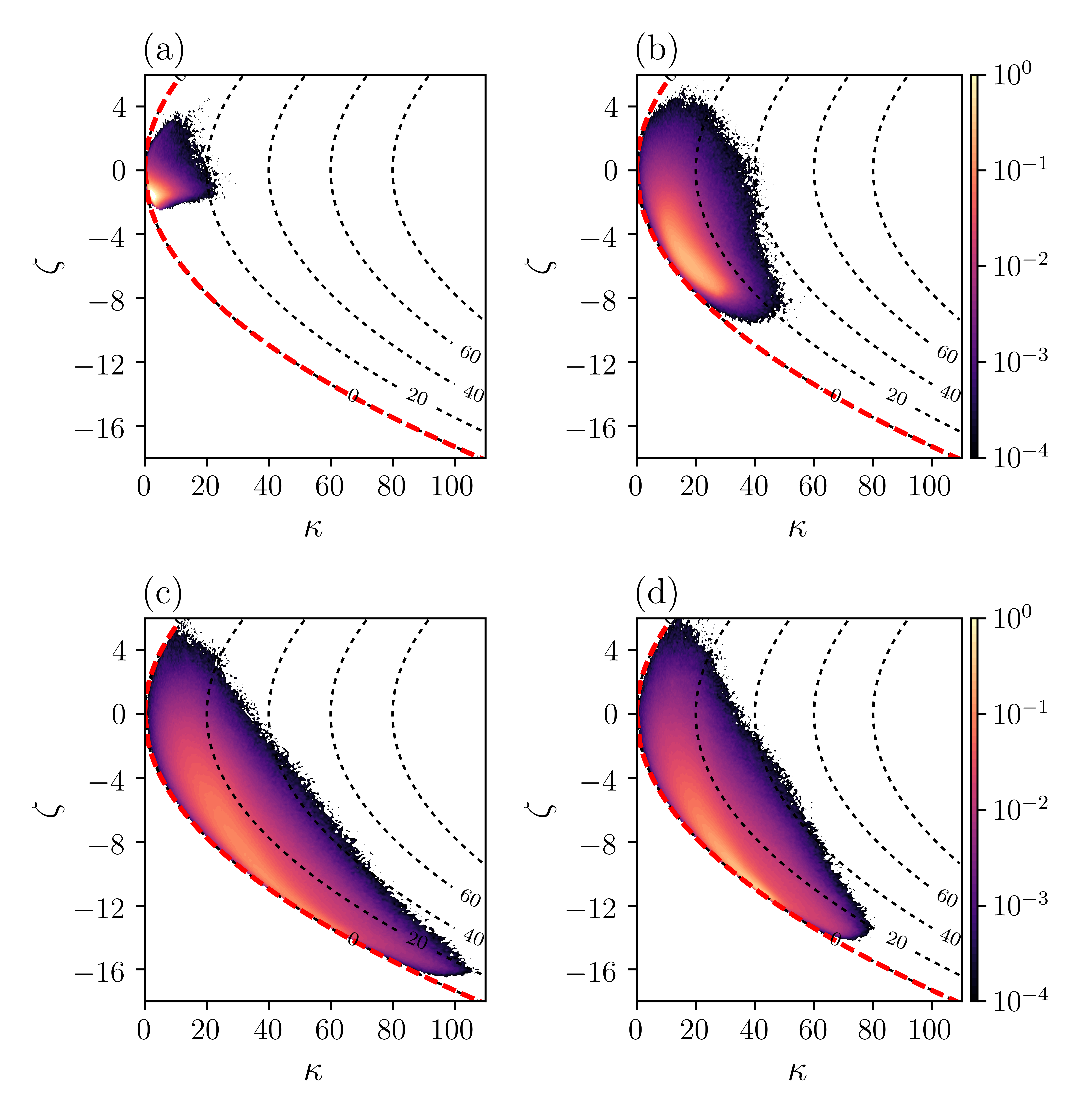}
\caption{Joint probability density functions (JPDF) of the logarithmic volume ratio, $\zeta$, and the geodesic distance from the identity, $\kappa$, at four different wall-normal locations: (a) $y^+=2$,
(b) $y^+=15$, (c) $y^+=100$ and (d) $y^+=180$ (centreline). Dotted lines (\linedotted) are isocontours of the anisotropy index, $\xi$. The thick red dashed line ({\color{red}\linedashedthick}) denotes the realizability bound, $\kappa = \frac{1}{3}\zeta^2$, derived in (\ref{admissible_zeta12_bounds}), which coincides with the zero anisotropy index isocontour ($\xi=0$).}
\label{fig:metrics-jpdf}
\end{figure}

In order to quantify the fluctuations observed in figure \ref{fig:metrics-snapshots} in more detail, we calculated the joint probability density function (JPDF) for $\zeta$ and $\kappa$ using 12 snapshots evenly spaced over 120 convective time units. The JPDF, at four different wall-normal locations, are shown in figure \ref{fig:metrics-jpdf} along with isocontours of $\xi$, which is purely a function of $\zeta$ and $\kappa$. 

The JPDF are non-zero primarily on the lower half of the realizability region, which is consistent with the isocontours in figure \ref{fig:metrics-snapshots}(a)--(b) and $\overline{\zeta} < 0$ throughout the channel in figure \ref{fig:metrics-profiles}. In addition, the isocontours tend to concentrate along the isotropy line (thick red dashed line) that was derived in (\ref{admissible_zeta12_bounds}) as a realizability bound. However, with the exception of the centreline, the most probable $(\zeta,\kappa)$ are located away from the isotropy line. This implies that the most likely conformation tensor away from the centreline does not have the same shape as the local mean conformation tensor. Although the most likely conformation tensor at the centreline assumes the shape of the mean conformation tensor, the JPDF at the centreline occupies a greater area in the $(\zeta,\kappa)$ plane than at $y^+=2$ or $y^+=15$ which implies a greater degree of uncertainty. In addition, the most likely conformation tensor, as determined by the peak of the JPDF, is further away from the mean than at any other wall-normal location.

The JPDF indicate that the most intermittent region of the flow, determined by the most extreme excursions away from the identity on $\Pos_3$, do not occur near the wall or at the centreline. This can be seen in figure \ref{fig:metrics-jpdf}, where the JPDF at $y^+=100$ shows events with up to $\kappa = 100$. This behaviour is consistent with the peak $\overline{\kappa}$ occurring away from the centreline  in figure \ref{fig:metrics-profiles}.

\begin{figure}
\hspace{0.6in} (a) \hspace{1.8in} (b)

\centering
\includegraphics[draft=false,width=0.75\textwidth]{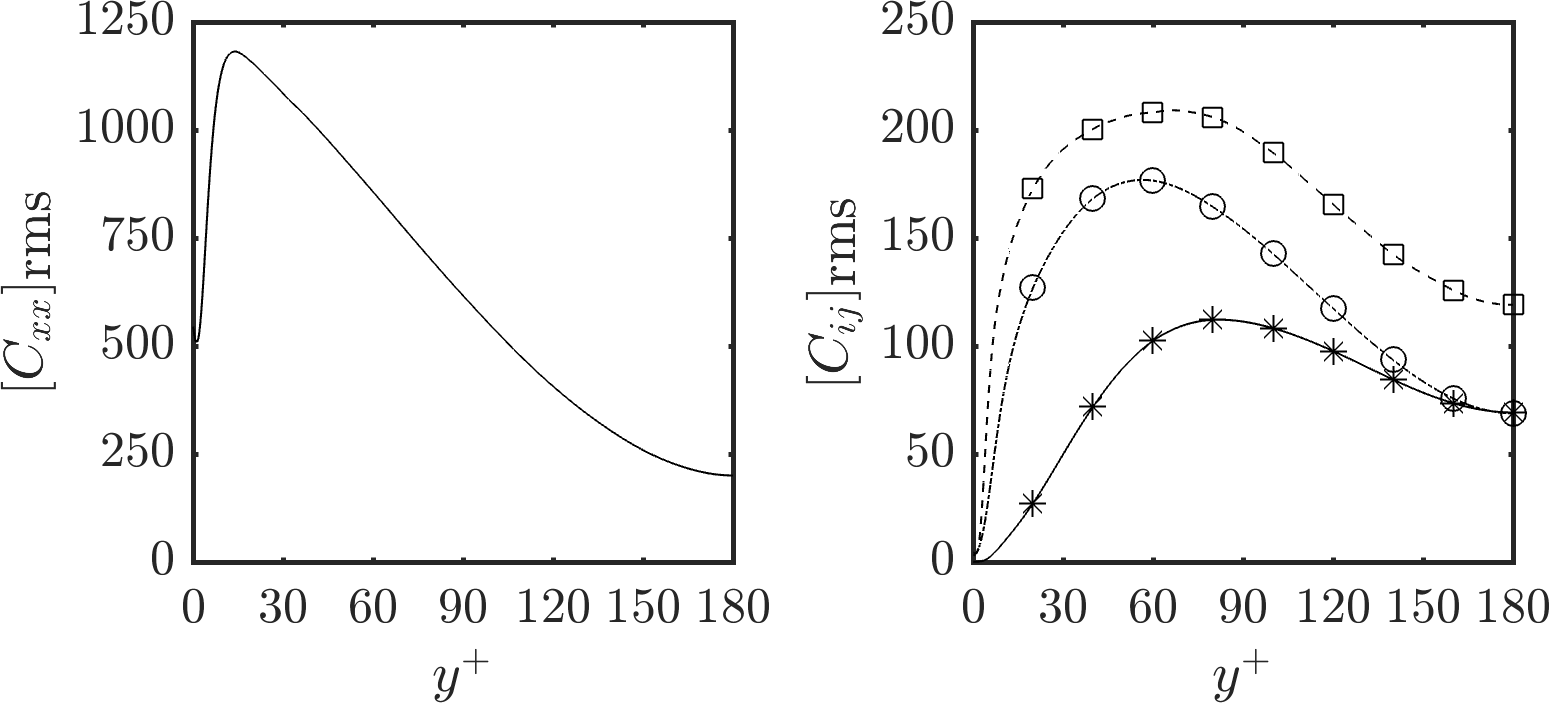}
\caption{Root-mean-square profiles of $\mathsfbi{C}'$, $[\mathsfi{C}_{ij}]_{\text{rms}} = \sqrt{\langle \mathsfi{C}_{ij}^2 \rangle - \langle \mathsfi{C}_{ij} \rangle^2}$ based on the Reynolds decomposition (\ref{CReyDecomp}). (a) $[\mathsfi{C}_{xx}]_{\text{rms}}$ (b) the solid line with star symbols (\linesolidsSym) is $[\mathsfi{C}_{yy}]_{\text{rms}}$, the dashed line with square symbols  (\linedashedSym) is $[\mathsfi{C}_{zz}]_{\text{rms}}$, and the dashed-dot line with circle symbols
		(\linedshdotSym) is $[\mathsfi{C}_{xy}]_{\text{rms}}$. Note that the symbols in (b) are identifiers  and are thus only a small subset of all the data points used in the line plots. }
\label{fig:C-rms-profiles}
\end{figure}

Finally the RMS of $\mathsfbi{C}'$, defined according to the Reynolds decomposition (\ref{CReyDecomp}), is shown in figure \ref{fig:C-rms-profiles} for comparison to the present approach. The protocol used to obtain these quantities was the same as that used to obtain the mean conformation tensor in figure \ref{fig:C-mean-profiles}.   The RMS of $\mathsfbi{C}'$ are of similar magnitude to the components of $\langle \mathsfbi{C} \rangle$. In fact, the peak RMS of $\mathsfi{C}_{yy}$, $\mathsfi{C}_{zz}$ and $\mathsfi{C}_{xy}$ are larger than their respective mean values. Interestingly, the peak fluctuating deformation found at $y^+ \approx 100$ using our present framework is not discernible from the RMS fluctuations. Different components of the RMS tensor peak at different locations in the channel with $[\mathsfi{C}_{yy}]_{\text{rms}}$ showing a peak that is closest to $y^+ = 100$. The RMS quantities only show the component-wise behaviour of the conformation tensor, and hence are not indicative of the total polymer deformation. A more appropriate quantity to evaluate in the context of $\mathsfbi{C}'$ would then be the JPDF of all six independent components of $\mathsfbi{C}'$. Owing to high-dimensionality, this characterization is more difficult to both calculate and analyse. The scalar measures suggested in the present work provide a good alternative to such a characterization.

\section{Conclusion}
\label{sec:conclusion} 
We have developed a geometric decomposition, given in (\ref{NecessaryDecomposition}), that overcomes the difficulties associated with the traditional Reynolds decomposition of $\mathsfbi{C}$.
The geometric decomposition yields a conformation tensor, $\mathsfbi{G}$, that describes the deformation of the polymer with respect to the mean deformation.
We characterized the fluctuations in $\mathsfbi{G}$ by using a geometry specifically constructed for $\Pos_3$ and obtained three scalar measures: the logarithmic volume ratio, $\zeta$, given in (\ref{zeta1defn}), the squared geodesic distance of the perturbation conformation tensor away from the origin, $\kappa$, given in (\ref{zeta2defn}), and the anisotropy index, $\xi$ given in (\ref{chiDefn1}), defined as the squared geodesic distance to the closest isotropic tensor.
The average values and JPDF of these scalar measures provided interesting insights about the fluctuating polymer deformation that are not readily available from a Reynolds decomposition of $\mathsfbi{C}$. These insights include the following:
\begin{enumerate}
\item The anisotropy in $\mathsfbi{\overline{C}}$, measured as geodesic distance away from $\mathsfbi{I}$ on $\Pos_3$, decreases logarithmically from $y^+ = 5$ to close to the centreline.
\item The mean conformation tensor tends to be significantly different than the most likely conformation tensor observed in the flow.
\item The mean polymer deformation, measured in terms of $\overline{\kappa}$, increases logarithmically from $y^+=10$ and peaks at $y^+\approx 100$. 
\item As evidenced by the JPDF of $\kappa$, the peak turbulence intensity in the polymers occurs in between the wall and centreline, at approximately $y^+\approx 100$.
\end{enumerate}
The universality of the trends mentioned above, and others documented in the present work, and their connection to  larger issues in viscoelastic turbulence are open questions. The framework we have developed can be used
to probe the dynamics in viscoelastic turbulence beyond channel flow and can also be exploited for
developing and benchmarking reduced-order models for viscoelastic turbulence.  
The approach can also be adapted to other similar problems, for example in the analysis of deforming droplets in turbulence using a model based on the droplet conformation tensor \citep{Maffettone1998,Biferale2014}.

An important, open question that needs to be resolved in future work is the relationship between the fluctuating conformation tensor $\mathsfbi{G}$ and elastic energy of the polymers. In contrast to the clear meaning of the kinetic energy associated with fluctuating velocity field, a deeper understanding of the elastic energy and its relation to $\mathsfbi{G}$ and the scalar measures introduced in the present work is unavailable. The attainment of such an understanding is partially hindered by the myriad of constitutive models prevalent in the literature \citep{Beris1994}. Instead of using the details of a particular constitutive model, the aim of the present work was to maintain as much generality as possible by exploiting the mathematical structure of $\mathsfbi{G}$ to characterize the fluctuating polymer deformation.

 \appendix

\section{Riemannian structure of the set of positive-definite matrices}

The theoretical results presented in this section on the geometric structure of $\Pos_n$ are standard with
detailed accounts available in pp. 322--339 of \citet{Lang2001} and also pp. 201--235 of \citet{Bhatia2015}.

\subsection{Riemannian metric}
We can define an inner product $(\cdot,\cdot):\mathbb{R}^{n\times n}\times\mathbb{R}^{n\times n} \rightarrow \mathbb{R}$
\begin{align}
(\mathsfbi{A},\mathsfbi{B})_{\mathsfbi{X}} = \tr\,
\left( \mathsfbi{X}^{-1}\bcdot \mathsfbi{A}^{\mathsf{T}} \bcdot \mathsfbi{X}^{-1} \bcdot \mathsfbi{B} \right)
\label{PosInnerProd}
\end{align} 
for $\mathsfbi{A},\mathsfbi{B} \in \mathbb{R}^{n \times n}$ where $\mathsfbi{X} \in \Pos_n$ is fixed. When $\mathsfbi{X}=\mathsfbi{I}$, (\ref{PosInnerProd}) reduces to the definition of the standard Frobenius inner product.
The guaranteed factorization $\mathsfbi{X}=\mathsfbi{X}^{\frac{1}{2}} \bcdot \mathsfbi{X}^{\frac{1}{2}}$ and the cyclical
property of the trace ensures the positivity of $\|\mathsfbi{A}\|_{\mathsfbi{X}}$, while the remaining properties 
of the inner product and the norm follow-on from the standard Frobenius theory. The space 
$\mathbb{R}^{n\times n}$ is a Hilbert space when equipped with such an inner product and the norm induced
by it: $\|\mathsfbi{A}\|_{\mathsfbi{X}} = \sqrt{(\mathsfbi{A},\mathsfbi{A})_{\mathsfbi{X}}}$. The subset of 
$\mathbb{R}^{n\times n}$ consisting of symmetric matrices forms a vector space, $\Sym_n$, and can
also be Hilbertized under the inner product (\ref{PosInnerProd}). $\Pos_n$ is an open subset of 
$\mathbb{R}^{n\times n}$ (and also $\Sym_n$) in the $\|\cdot\|_{\mathsfbi{X}}$ metric and is thus a 
(smooth) manifold.

The tangent bundle of $\Pos_n$, which consists of the manifold $\Pos_n$ equipped with a tangent space 
$T_{\mathsfbi{X}} \Pos_n$ at each point $\mathsfbi{X}$ of $\Pos_n$, provides a natural projection that can be used
to study the geometry of $\Pos_n$. A simple argument shows that the tangent space at each point of $\Pos_n$ coincides with $\Sym_n$ (Let $\mathsfbi{Y}$ be defined by $\mathsfbi{Y}=\mathsfbi{X} + \varepsilon \mathsfbi{S}$, for some $\mathsfbi{X}\in\Pos_n$,
$\varepsilon \in \mathbb{R}$, $\mathsfbi{S} \in \Sym_n$. By Weyl's inequality, there exists some 
$\varepsilon > 0$ sufficiently small such that $\mathsfbi{Y} \in \Pos_n$. This implies that 
$\Sym_n \subseteq T_{\mathsfbi{X}} \Pos_n$. Since $\mathsfbi{Y}\notin\Pos_n$ for any $\varepsilon \neq 0$ and
$\mathsfbi{S} \notin \Sym_n$, we have $\Sym_n = T_{\mathsfbi{X}} \Pos_n$.). The latter result is the geometric underpinning of numerical algorithms
that time march the conformation tensor by translations of $\mathsfbi{C}$ by symmetric matrices (the right-hand
side of the evolution equation for $\mathsfbi{C}$). 

A manifold $\textbf{M}$ equipped with a scalar product over $T_{\mathsfbi{X}} \textbf{M}$ for each $\mathsfbi{X} \in \textbf{M}$ is a Riemannian manifold.
The set of such scalar products is called the Riemannian metric of the manifold. $\Pos_n$ is a Riemannian manifold with Riemannian metric given by \citep{Lang2001,Bhatia2015}
\begin{align}
g=\left\{ (\cdot,\cdot)_{\mathsfbi{X}} | \mathsfbi{X} \in \Pos_n  \right\}
\label{RiemmanianMetric}
\end{align}
with the understanding that the scalar product on $T_{\mathsfbi{X}} \Pos_n$ is $(\cdot,\cdot)_{\mathsfbi{X}} \in g$
and the domain of $(\cdot,\cdot)_{\mathsfbi{X}}$ is restricted to $\Sym_n = T_{\mathsfbi{X}} \Pos_n$. An 
infinitesimal distance around the point $\mathsfbi{X}$ on the manifold is given by 
\begin{align}
ds^2 = \| \text{d} \mathsfbi{X} \|_{\mathsfbi{X}}^2 = \tr\,\left[(\mathsfbi{X}^{-1}\bcdot \text{d} \mathsfbi{X})^2\right]
\label{ds2Manifold}
\end{align}
The metric given by (\ref{RiemmanianMetric}) ensures that distances between points 
$\mathsfbi{X}, \mathsfbi{Y} \in \Pos_n$ along the manifold calculated using (\ref{ds2Manifold}) are invariant under
the action $\glaction{\mathsfbi{A}}{\cdot}$ of any $\mathsfbi{A}\in\GL_n$, i.e. invariant to transformations such as (\ref{NecessaryDecomposition}).

\subsection{Geodesic curves and distances}

Consider a parameterized curve on $\Pos_n$ connecting points $\mathsfbi{X},\mathsfbi{Y} \in \Pos_n$, i.e.
$\mathsfbi{P}: [0,1] \rightarrow \Pos_n$ with $\mathsfbi{P}(0) = \mathsfbi{X}$ and $\mathsfbi{P}(1) = \mathsfbi{Y}$. The distance,
in the sense of the metric $g$, traversed on the manifold along the curve $\mathsfbi{P}=\mathsfbi{P}(r)$ is given by
\begin{align}
\ell_{\mathsfbi{P}}(r) \equiv \int_0^r \left\|  \frac{\text{d} \mathsfbi{P}(r')}{\text{d} t}  \right\|_{\mathsfbi{P}(r')} \, \text{d}r'.
\end{align}
$\ell_{\mathsfbi{P}}$ has an attractive property in that it is invariant under affine transformations.

\begin{lem}[Affine invariance]
For each positive-definite $\mathsfbi{A}$ and differentiable path $\mathsfbi{P}$ on the Riemannian manifold of
positive-definite matrices, we have
\begin{align}
\ell_{\mathsfbi{P}}=\ell_{\glaction{\mathsfbi{A}}{\mathsfbi{P}}}.
\end{align}
\end{lem}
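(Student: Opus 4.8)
The plan is to reduce the claim to a pointwise identity between the Riemannian \emph{speeds} of the two paths. Since $\ell_{\mathsfbi{P}}$ is by definition the integral over the curve parameter of $\|\text{d}\mathsfbi{P}/\text{d}r\|_{\mathsfbi{P}}$, it suffices to show that $\|\text{d}\mathsfbi{Q}/\text{d}r\|_{\mathsfbi{Q}} = \|\text{d}\mathsfbi{P}/\text{d}r\|_{\mathsfbi{P}}$ at every parameter value, where $\mathsfbi{Q}(r)\equiv\glaction{\mathsfbi{A}}{\mathsfbi{P}(r)}$. Equality of the integrands then yields equality of the integrals, and hence $\ell_{\mathsfbi{P}}=\ell_{\glaction{\mathsfbi{A}}{\mathsfbi{P}}}$.

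Carrying this out, I would first use that $\mathsfbi{A}$ is constant along the path, so that $\text{d}\mathsfbi{Q}/\text{d}r = \mathsfbi{A}\bcdot(\text{d}\mathsfbi{P}/\text{d}r)\bcdot\mathsfbi{A}^{\mathsf{T}}$, while the inverse factorizes as $\mathsfbi{Q}^{-1} = \mathsfbi{A}^{-\mathsf{T}}\bcdot\mathsfbi{P}^{-1}\bcdot\mathsfbi{A}^{-1}$. Substituting both into $\mathsfbi{Q}^{-1}\bcdot(\text{d}\mathsfbi{Q}/\text{d}r)$, the interior product $\mathsfbi{A}^{-1}\bcdot\mathsfbi{A}=\mathsfbi{I}$ cancels, leaving the conjugation relation $\mathsfbi{Q}^{-1}\bcdot(\text{d}\mathsfbi{Q}/\text{d}r) = \mathsfbi{A}^{-\mathsf{T}}\bcdot\bigl[\mathsfbi{P}^{-1}\bcdot(\text{d}\mathsfbi{P}/\text{d}r)\bigr]\bcdot\mathsfbi{A}^{\mathsf{T}}$. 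Thus $\mathsfbi{Q}^{-1}\bcdot(\text{d}\mathsfbi{Q}/\text{d}r)$ is similar to $\mathsfbi{P}^{-1}\bcdot(\text{d}\mathsfbi{P}/\text{d}r)$, with the similarity realized by $\mathsfbi{A}^{\mathsf{T}}$.

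Squaring this relation, the adjacent factor $\mathsfbi{A}^{\mathsf{T}}\bcdot\mathsfbi{A}^{-\mathsf{T}}=\mathsfbi{I}$ again cancels, so $\bigl[\mathsfbi{Q}^{-1}\bcdot(\text{d}\mathsfbi{Q}/\text{d}r)\bigr]^2 = \mathsfbi{A}^{-\mathsf{T}}\bcdot\bigl[\mathsfbi{P}^{-1}\bcdot(\text{d}\mathsfbi{P}/\text{d}r)\bigr]^2\bcdot\mathsfbi{A}^{\mathsf{T}}$. Taking the trace and invoking its cyclic property removes the outer $\mathsfbi{A}^{\pm\mathsf{T}}$ factors, which by the definition of the norm in (\ref{ds2Manifold}) gives exactly $\|\text{d}\mathsfbi{Q}/\text{d}r\|_{\mathsfbi{Q}}^2 = \|\text{d}\mathsfbi{P}/\text{d}r\|_{\mathsfbi{P}}^2$, and a square root finishes the pointwise identity. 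Equivalently, one may work directly from the inner product (\ref{PosInnerProd}), where the same cancellations together with the cyclic trace do the job without ever invoking symmetry of the tangent vector.

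I do not expect a genuine obstacle here: the result is a direct consequence of the invariance of the metric under the group action, and the only thing demanding care is the transpose bookkeeping and verifying that the interior $\mathsfbi{A}$-factors cancel at each stage. It is worth flagging that positive-definiteness of $\mathsfbi{A}$ is never actually used --- only invertibility --- which is consistent with the affine invariance stated for all $\mathsfbi{A}\in\GL_n$ earlier in the text; the cancellations leading to the conjugation structure, together with the cyclic trace, are precisely what make the speed invariant regardless of whether $\mathsfbi{A}$ is symmetric.
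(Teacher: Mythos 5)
Your proposal is correct and is the standard argument: the paper itself only cites \S 6.1.1 of \citet{Bhatia2015} for this lemma, and the proof given there is precisely your pointwise computation showing $\|\text{d}\mathsfbi{Q}/\text{d}r\|_{\mathsfbi{Q}}=\|\text{d}\mathsfbi{P}/\text{d}r\|_{\mathsfbi{P}}$ via the conjugation relation and the cyclic trace, followed by integration. Your observation that only invertibility of $\mathsfbi{A}$ is needed is also right and consistent with the paper's earlier claim of affine invariance of $d(\cdot,\cdot)$ under all of $\GL_3$.
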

\begin{proof}
 \S 6.1.1 in \citet{Bhatia2015}. 
\end{proof}

We call a curve $\mathsfbi{P}(r)$ on $\Pos_n$ that minimizes $\ell_{\mathsfbi{P}}(1)$ a \emph{geodesic curve} 
connecting $\mathsfbi{X}$ and $\mathsfbi{Y}$. In general, the existence and/or uniqueness of a geodesic curve is not
guaranteed. We also define $d(\mathsfbi{X},\mathsfbi{Y})$, the \emph{geodesic distance} between $\mathsfbi{X}$ and 
$\mathsfbi{Y}$ as the infimum of $\ell_{\mathsfbi{P}}(1)$ over all possible curves $\mathsfbi{P}$ connecting $\mathsfbi{X}$
and $\mathsfbi{Y}$
\begin{align}
d(\mathsfbi{X},\mathsfbi{Y}) \equiv
\inf_{\mathsfbi{P}}\left\{ \ell_{\mathsfbi{P}}(1) | \mathsfbi{P}(r) \in \Pos_n, \mathsfbi{P}(0) = \mathsfbi{X}, \mathsfbi{P}(1)=\mathsfbi{Y}  \right\}.
\end{align}
A corollary of affine invariance is that 
$d(\mathsfbi{X},\mathsfbi{Y}) = d(\glaction{\mathsfbi{A}}{\mathsfbi{X}},\glaction{\mathsfbi{A}}{\mathsfbi{Y}})$.

It turns out that the existence and uniqueness of geodesics is guaranteed on $\Pos_n$.
Furthermore, we can obtain analytical expressions for these geodesics. Following \citet{Bhatia2015}, we present three key
theorems that allow this construction.

\begin{thm}[Exponential metric increasing property]
For any two real symmetric $\bld{\mathcal{X}}$ and $\bld{\mathcal{Y}}$
\begin{align}
d(\e{\bld{\mathcal{X}}},\e{\bld{\mathcal{Y}}}) \leq \| \bld{\mathcal{X}} - \bld{\mathcal{Y}} \|_{\mathsfbi{I}}
\label{EMI}
\end{align}
where we note that $\e{\bld{\mathcal{X}}}$, $\e{\bld{\mathcal{Y}}}$ are positive-definite matrices.
\end{thm}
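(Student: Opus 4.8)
The plan is to bound the geodesic distance from above by producing an explicit curve joining $\e{\bld{\mathcal{X}}}$ and $\e{\bld{\mathcal{Y}}}$ and estimating its length, exploiting that $d$ is by definition the infimum of $\ell_{\mathsfbi{P}}(1)$ over all connecting curves, so that any admissible path supplies an upper bound. The natural candidate is the image under the matrix exponential of the straight segment in $\Sym_n$,
\begin{align}
\mathsfbi{P}(t) = \e{\bld{\mathcal{S}}(t)}, \qquad \bld{\mathcal{S}}(t) = (1-t)\bld{\mathcal{X}} + t\bld{\mathcal{Y}}, \qquad t\in[0,1], \nonumber
\end{align}
which lies in $\Pos_n$ for every $t$ because $\bld{\mathcal{S}}(t)$ is symmetric, and which satisfies $\mathsfbi{P}(0)=\e{\bld{\mathcal{X}}}$, $\mathsfbi{P}(1)=\e{\bld{\mathcal{Y}}}$. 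The problem then reduces to showing $\ell_{\mathsfbi{P}}(1) \leq \| \bld{\mathcal{X}} - \bld{\mathcal{Y}} \|_{\mathsfbi{I}}$, after which the infimal characterisation of $d$ closes the argument.

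The main computation I would carry out is the evaluation of the integrand $\| \dot{\mathsfbi{P}}(t) \|_{\mathsfbi{P}(t)}$. Writing the derivative of the matrix exponential through its integral representation, $\dot{\mathsfbi{P}} = \int_0^1 \e{s\bld{\mathcal{S}}} (\bld{\mathcal{Y}}-\bld{\mathcal{X}}) \e{(1-s)\bld{\mathcal{S}}}\,\text{d}s$, and inserting this into the Riemannian norm $\| \dot{\mathsfbi{P}} \|_{\mathsfbi{P}} = \| \e{-\bld{\mathcal{S}}/2}\, \dot{\mathsfbi{P}}\, \e{-\bld{\mathcal{S}}/2} \|_{\mathsfbi{I}}$, the flanking factors $\e{\pm\bld{\mathcal{S}}/2}$ symmetrise the exponents. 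Diagonalising $\bld{\mathcal{S}}(t) = \mathsfbi{Q}\,\text{diag}(\lambda_i)\,\mathsfbi{Q}^{\mathsf{T}}$ and setting $\mathsfbi{V} = \mathsfbi{Q}^{\mathsf{T}}(\bld{\mathcal{Y}}-\bld{\mathcal{X}})\mathsfbi{Q}$, the $(i,j)$ entry of the symmetrised velocity becomes $\big[\sinh((\lambda_i-\lambda_j)/2)/((\lambda_i-\lambda_j)/2)\big]\,\mathsfbi{V}_{ij}$, so that
\begin{align}
\| \dot{\mathsfbi{P}}(t) \|_{\mathsfbi{P}(t)}^2 = \sum_{i,j} \left( \frac{\sinh\!\big((\lambda_i-\lambda_j)/2\big)}{(\lambda_i-\lambda_j)/2} \right)^{\!2} |\mathsfbi{V}_{ij}|^2, \nonumber
\end{align}
with the divided difference read as $1$ when $\lambda_i=\lambda_j$. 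This reduces everything to comparing these divided-difference weights against the plain Frobenius norm $\sum_{i,j}|\mathsfbi{V}_{ij}|^2 = \|\bld{\mathcal{Y}}-\bld{\mathcal{X}}\|_{\mathsfbi{I}}^2$.

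The crux, and the step I expect to be the main obstacle, is pairing the right estimate with the right direction of comparison between $\ell_{\mathsfbi{P}}(1)$ and the chord length $\|\bld{\mathcal{X}}-\bld{\mathcal{Y}}\|_{\mathsfbi{I}}$: a curve length bounds $d$ from above, whereas a pullback of the connecting geodesic through the matrix logarithm bounds $d$ from below, and the delicate point is which of these two routes is controlled by the scalar behaviour of $x\mapsto \sinh(x)/x$. The commuting case, where the eigenframe $\mathsfbi{Q}$ is constant in $t$ and the divided differences collapse to exact exponents, furnishes equality and serves as a consistency check, so that all of the difficulty is attributable to non-commutativity, where both $\mathsfbi{Q}(t)$ and the $\lambda_i(t)$ vary along the segment. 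I would manage this by first invoking the affine-invariance lemma just established to normalise one endpoint (translating so that the relevant base point is $\mathsfbi{I}$), thereby simplifying the family $\bld{\mathcal{S}}(t)$, and then using the monotonicity and convexity of $\sinh(x)/x$ to pass from the pointwise weights to the integrated length and to fix the direction of the resulting bound before assembling it over $t\in[0,1]$.
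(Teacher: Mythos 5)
Your computation of the pushed-forward metric is correct --- the symmetrised velocity does have entries weighted by $\sinh\left((\lambda_i-\lambda_j)/2\right)\big/\left((\lambda_i-\lambda_j)/2\right)$ --- but it defeats your plan rather than completing it, because $\sinh(x)/x \geq 1$ for every real $x$, with equality only at $x=0$. Pointwise, therefore, $\| \dot{\mathsfbi{P}}(t) \|_{\mathsfbi{P}(t)} \geq \| \bld{\mathcal{Y}} - \bld{\mathcal{X}} \|_{\mathsfbi{I}}$, so your candidate curve satisfies $\ell_{\mathsfbi{P}}(1) \geq \| \bld{\mathcal{X}} - \bld{\mathcal{Y}} \|_{\mathsfbi{I}}$, strictly so when $\bld{\mathcal{X}}$ and $\bld{\mathcal{Y}}$ do not commute (along the segment, $[\dot{\bld{\mathcal{S}}},\bld{\mathcal{S}}(t)]=[\bld{\mathcal{Y}},\bld{\mathcal{X}}]$ for all $t$, so the off-diagonal weights exceeding $1$ are always activated). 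An upper bound $d \leq \ell_{\mathsfbi{P}}(1)$ by a quantity that itself dominates the target can never yield (\ref{EMI}), no monotonicity or convexity of $\sinh(x)/x$ reverses this, and the affine normalisation you invoke does not simplify the family $\bld{\mathcal{S}}(t)$ either: congruence by $\e{-\bld{\mathcal{X}}/2}$ does not intertwine with the matrix exponential, so it does not carry $\e{\bld{\mathcal{S}}(t)}$ to the exponential of a segment unless the endpoints already commute.

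What your calculation actually establishes is the infinitesimal exponential-metric-increasing inequality $\|\e{-\bld{\mathcal{S}}/2}\bcdot\dot{\mathsfbi{P}}\bcdot\e{-\bld{\mathcal{S}}/2}\|_{\mathsfbi{I}} \geq \|\dot{\bld{\mathcal{S}}}\|_{\mathsfbi{I}}$, and this is precisely the engine of the proof in the sources the paper cites for this theorem (\S XII.2 of \citet{Lang2001}; chapter 6 of \citet{Bhatia2015}) --- but there the conclusion runs the other way. Applying the pointwise inequality to an \emph{arbitrary} connecting path (the ``pullback through the logarithm'' route you mention only in passing) shows that every curve joining $\e{\bld{\mathcal{X}}}$ and $\e{\bld{\mathcal{Y}}}$ on $\Pos_n$ has Riemannian length at least the Euclidean length of its logarithm, hence at least $\|\bld{\mathcal{X}}-\bld{\mathcal{Y}}\|_{\mathsfbi{I}}$; taking the infimum gives $d(\e{\bld{\mathcal{X}}},\e{\bld{\mathcal{Y}}}) \geq \|\bld{\mathcal{X}}-\bld{\mathcal{Y}}\|_{\mathsfbi{I}}$. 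In other words, the inequality in (\ref{EMI}) as printed is reversed relative to the cited theorem: the exponential map is metric \emph{increasing} on this seminegatively curved manifold, the commuting case gives equality (so your consistency check cannot disambiguate the direction), and the ``$\leq$'' statement is false for generic non-commuting $\bld{\mathcal{X}},\bld{\mathcal{Y}}$. No completion of your strategy can succeed as stated; redirected at arbitrary pulled-back paths, your computation proves the correct ``$\geq$'' version essentially verbatim, and the subsequent development in the paper (equality and explicit geodesics in the commuting case, then affine invariance) goes through with that orientation.
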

\begin{proof}
 \S  XII.2 in \citet{Lang2001} and {\S}6.1.4 in \citet{Bhatia2015}. 
\end{proof}

Equality is achieved in (\ref{EMI}) when $\mathsfbi{X}$ and $\mathsfbi{Y}$ commute and we can also parameterize the
geodesic in this case, as expressed in the proposition below.

\begin{prop}
Let $\mathsfbi{X} = \e{\bld{\mathcal{X}}}$ and $\mathsfbi{Y} = \e{\bld{\mathcal{Y}}}$ be positive-definite matrices
such that $\mathsfbi{X}\bcdot\mathsfbi{Y}=\mathsfbi{Y}\bcdot \mathsfbi{X}$. Then, the exponential function maps the line segment
\begin{align}
(1-r)\bld{\mathcal{X}} + r \bld{\mathcal{Y}}, \qquad 0 \leq r \leq 1
\end{align}
in the Euclidean space of symmetric matrices to the geodesic between $\mathsfbi{X}$ and $\mathsfbi{Y}$ on the
Riemannian manifold of positive-definite matrices and
\begin{align}
d(\mathsfbi{X},\mathsfbi{Y}) = \| \bld{\mathcal{X}} - \bld{\mathcal{Y}} \|_{\mathsfbi{I}}
\end{align} 
\end{prop}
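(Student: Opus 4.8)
The plan is to exhibit an explicit curve from $\mathsfbi{X}$ to $\mathsfbi{Y}$ whose length is exactly $\|\bld{\mathcal{X}} - \bld{\mathcal{Y}}\|_{\mathsfbi{I}}$, and then to match this upper bound on the geodesic distance against the lower bound supplied by the exponential metric increasing property, thereby forcing equality and identifying the curve as the geodesic. The crucial preliminary observation is that, since $\mathsfbi{X}$ and $\mathsfbi{Y}$ are positive-definite (hence symmetric) and commute, they are simultaneously diagonalised by a single orthogonal matrix; consequently their logarithms $\bld{\mathcal{X}} = \log\mathsfbi{X}$ and $\bld{\mathcal{Y}} = \log\mathsfbi{Y}$ are diagonal in the same basis and therefore also commute. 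This commutativity is what renders all of the ensuing computations elementary.

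First I would introduce the candidate curve $\mathsfbi{P}(r) = \e{(1-r)\bld{\mathcal{X}} + r\bld{\mathcal{Y}}}$, which is the image under the matrix exponential of the stated line segment and satisfies $\mathsfbi{P}(0) = \mathsfbi{X}$ and $\mathsfbi{P}(1) = \mathsfbi{Y}$. Setting $\mathsfbi{A}(r) = (1-r)\bld{\mathcal{X}} + r\bld{\mathcal{Y}}$, so that $\mathsfbi{A}'(r) = \bld{\mathcal{Y}} - \bld{\mathcal{X}}$, commutativity of $\mathsfbi{A}(r)$ with $\mathsfbi{A}'(r)$ gives the simple derivative $\text{d}\mathsfbi{P}/\text{d}r = (\bld{\mathcal{Y}} - \bld{\mathcal{X}})\bcdot\mathsfbi{P}(r)$. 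Feeding this into the line element (\ref{ds2Manifold}) and again using commutativity to write $\mathsfbi{P}^{-1}\bcdot(\text{d}\mathsfbi{P}/\text{d}r) = \bld{\mathcal{Y}} - \bld{\mathcal{X}}$, I obtain
\begin{align}
\left\| \frac{\text{d}\mathsfbi{P}}{\text{d}r} \right\|_{\mathsfbi{P}(r)}^{2} = \tr\left[ \left( \mathsfbi{P}^{-1}\bcdot\frac{\text{d}\mathsfbi{P}}{\text{d}r} \right)^{2} \right] = \tr\left[ (\bld{\mathcal{X}} - \bld{\mathcal{Y}})^{2} \right] = \| \bld{\mathcal{X}} - \bld{\mathcal{Y}} \|_{\mathsfbi{I}}^{2}, \nonumber
\end{align}
which is independent of $r$. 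Integrating over $[0,1]$ gives $\ell_{\mathsfbi{P}}(1) = \|\bld{\mathcal{X}} - \bld{\mathcal{Y}}\|_{\mathsfbi{I}}$, and since $d(\mathsfbi{X},\mathsfbi{Y})$ is the infimum of lengths over all connecting curves this already yields the upper bound $d(\mathsfbi{X},\mathsfbi{Y}) \le \|\bld{\mathcal{X}} - \bld{\mathcal{Y}}\|_{\mathsfbi{I}}$.

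To finish, the reverse inequality $d(\mathsfbi{X},\mathsfbi{Y}) \ge \|\bld{\mathcal{X}} - \bld{\mathcal{Y}}\|_{\mathsfbi{I}}$ is exactly the content of the exponential metric increasing property (\ref{EMI}), and the two bounds together force $d(\mathsfbi{X},\mathsfbi{Y}) = \|\bld{\mathcal{X}} - \bld{\mathcal{Y}}\|_{\mathsfbi{I}}$, which is the stated distance formula. Moreover, because $\mathsfbi{P}$ attains this infimal length and traverses it at constant speed, it is a length-minimising, constant-speed curve and hence the geodesic joining $\mathsfbi{X}$ and $\mathsfbi{Y}$; this is precisely the assertion that $\exp$ carries the Euclidean line segment onto the geodesic. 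The one step demanding genuine care, rather than bookkeeping, is the differentiation identity $\text{d}\mathsfbi{P}/\text{d}r = (\bld{\mathcal{Y}} - \bld{\mathcal{X}})\bcdot\mathsfbi{P}$: for a general symmetric-valued path the derivative of a matrix exponential is the Fr\'echet-derivative integral $\int_0^1 \e{s\mathsfbi{A}}\bcdot\mathsfbi{A}'\bcdot\e{(1-s)\mathsfbi{A}}\,\text{d}s$, and it is only the commutativity of $\bld{\mathcal{X}}$ and $\bld{\mathcal{Y}}$ that collapses this to the scalar-like form. I would therefore devote most of the rigour to establishing that commutativity and its propagation through the exponential and the trace.
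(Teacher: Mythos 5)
Your proof is correct, and it is essentially the argument the paper delegates to the literature: the paper's own ``proof'' of this proposition is only the citation ``Chapter 6 in Bhatia (2015)'', and what you have written out is precisely the computation contained there --- exhibit $\e{(1-r)\bld{\mathcal{X}} + r\bld{\mathcal{Y}}}$ as an explicit connecting curve of constant speed and total length $\| \bld{\mathcal{X}} - \bld{\mathcal{Y}} \|_{\mathsfbi{I}}$ (using commutativity of $\bld{\mathcal{X}}$ and $\bld{\mathcal{Y}}$, inherited from that of $\mathsfbi{X}$ and $\mathsfbi{Y}$ via simultaneous diagonalisation, to collapse the Fr\'echet derivative of the exponential), and then match this upper bound on $d(\mathsfbi{X},\mathsfbi{Y})$ against the lower bound from the exponential metric increasing property. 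One caution on the last step: as printed in the paper, (\ref{EMI}) reads $d(\e{\bld{\mathcal{X}}},\e{\bld{\mathcal{Y}}}) \leq \| \bld{\mathcal{X}} - \bld{\mathcal{Y}} \|_{\mathsfbi{I}}$, which is the opposite direction to the one you invoke; that sign is a slip in the paper (the exponential map on a manifold of seminegative curvature is metric \emph{increasing}, so the correct statement --- and the one in Bhatia --- is $\geq$, and the $\leq$ version cannot hold in general or the manifold would be flat). Your use of the $\geq$ direction is the right one, the two bounds pinch as you say, and the length-attaining constant-speed curve is the unique geodesic.
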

\begin{proof}
Chapter 6 in \citet{Bhatia2015}.
\end{proof}

Finally, using the affine invariance property of the Riemannian metric and noting that $\mathsfbi{I}$ commutes
with every element of $\Pos_n$, one can prove the following theorem.

\begin{thm}
Let $\mathsfbi{X}$ and $\mathsfbi{Y}$ be positive-definite matrices. There exists a unique geodesic 
$\mathsfbi{X}\#_r \mathsfbi{Y}$ on the Riemannian manifold of positive-definite matrices that joins $\mathsfbi{X}$ and 
$\mathsfbi{Y}$ with the following parametrisation
\begin{align}
\mathsfbi{X}\#_r \mathsfbi{Y} = 
\mathsfbi{X}^{\frac{1}{2}} \bcdot 
\left(\mathsfbi{X}^{-\frac{1}{2}}\bcdot \mathsfbi{Y}\bcdot \mathsfbi{X}^{-\frac{1}{2}} \right)^r \bcdot
\mathsfbi{X}^{\frac{1}{2}}
\end{align} 
which is natural in the sense that
\begin{align}
d(\mathsfbi{X},\mathsfbi{X}\#_r \mathsfbi{Y}) = r d(\mathsfbi{X},\mathsfbi{Y})
\end{align} 
for each $r \in \mathbb{R}$. Furthermore, we have
\begin{align}
d(\mathsfbi{X},\mathsfbi{Y}) = 
\left\| \log\, \left( \mathsfbi{X}^{-\frac{1}{2}}\bcdot\mathsfbi{Y}\bcdot \mathsfbi{X}^{-\frac{1}{2}}\right)
\right\|_{\mathsfbi{I}} 
= \left[\sum_{i=1}^3 \left(\log\, \sigma_i\left( \mathsfbi{X}^{-1}\bcdot \mathsfbi{Y} \right)\right)^2\right]^{\frac{1}{2}}
\end{align} 
\end{thm}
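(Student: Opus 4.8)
The plan is to reduce the general statement to the commuting case already handled by the Proposition, using the congruence $\phi_{\mathsfbi{A}}(\mathsfbi{B}) = \glaction{\mathsfbi{A}}{\mathsfbi{B}}$ as an isometry of the manifold. First I would fix $\mathsfbi{A} = \mathsfbi{X}^{-\frac{1}{2}}$ and set $\mathsfbi{Z} \equiv \glaction{\mathsfbi{X}^{-\frac{1}{2}}}{\mathsfbi{Y}} = \mathsfbi{X}^{-\frac{1}{2}}\bcdot\mathsfbi{Y}\bcdot\mathsfbi{X}^{-\frac{1}{2}}$, observing that $\phi_{\mathsfbi{X}^{-\frac{1}{2}}}$ sends $\mathsfbi{X}$ to $\mathsfbi{I}$ and $\mathsfbi{Y}$ to $\mathsfbi{Z}$. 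Since $\mathsfbi{I}$ commutes with every positive-definite matrix, the Proposition applies to the pair $(\mathsfbi{I},\mathsfbi{Z})$: writing $\mathsfbi{I} = \e{\bld{0}}$ and $\mathsfbi{Z} = \e{\log\mathsfbi{Z}}$, the image of the segment $(1-r)\bld{0} + r\log\mathsfbi{Z}$ under the matrix exponential is the geodesic $\mathsfbi{I}\#_r\mathsfbi{Z} = \e{r\log\mathsfbi{Z}} = \mathsfbi{Z}^r$, with $d(\mathsfbi{I},\mathsfbi{Z}) = \| \log\mathsfbi{Z} \|_{\mathsfbi{I}}$.

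Next I would transport this back to $(\mathsfbi{X},\mathsfbi{Y})$. Because the affine-invariance Lemma shows $\phi_{\mathsfbi{X}^{\frac{1}{2}}}$ preserves the length of every path, it carries geodesics to geodesics and is a $d$-isometry; being a bijection of $\Pos_3$ with inverse $\phi_{\mathsfbi{X}^{-\frac{1}{2}}}$, it also transports uniqueness. Applying $\phi_{\mathsfbi{X}^{\frac{1}{2}}}$ to $\mathsfbi{Z}^r$ gives $\mathsfbi{X}^{\frac{1}{2}}\bcdot(\glaction{\mathsfbi{X}^{-\frac{1}{2}}}{\mathsfbi{Y}})^r\bcdot\mathsfbi{X}^{\frac{1}{2}}$, which is exactly the claimed parametrization of $\mathsfbi{X}\#_r\mathsfbi{Y}$, and the isometry property immediately yields $d(\mathsfbi{X},\mathsfbi{Y}) = d(\mathsfbi{I},\mathsfbi{Z}) = \| \log\mathsfbi{Z}\|_{\mathsfbi{I}}$ as well as $d(\mathsfbi{X},\mathsfbi{X}\#_r\mathsfbi{Y}) = d(\mathsfbi{I},\mathsfbi{Z}^r)$. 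For the latter I would use $\log\mathsfbi{Z}^r = r\log\mathsfbi{Z}$ together with homogeneity of the norm, giving $d(\mathsfbi{I},\mathsfbi{Z}^r) = |r|\,\| \log\mathsfbi{Z}\|_{\mathsfbi{I}} = |r|\,d(\mathsfbi{X},\mathsfbi{Y})$, the stated factor $r$ holding on the range $r \geq 0$ relevant to geodesic segments.

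Finally, I would convert $\| \log\mathsfbi{Z}\|_{\mathsfbi{I}}$ into the eigenvalue form. Since the $\mathsfbi{I}$-norm is the Frobenius norm, $\| \log\mathsfbi{Z}\|_{\mathsfbi{I}}^2 = \tr\,(\log\mathsfbi{Z})^2$, and because the eigenvalues of $\log\mathsfbi{Z}$ are $\log\sigma_i(\mathsfbi{Z})$ this equals $\sum_i (\log\sigma_i(\mathsfbi{Z}))^2$. The last step is to observe that $\mathsfbi{Z}$ is similar to $\mathsfbi{X}^{-1}\bcdot\mathsfbi{Y}$, via $\mathsfbi{X}^{-\frac{1}{2}}\bcdot\mathsfbi{Z}\bcdot\mathsfbi{X}^{\frac{1}{2}} = \mathsfbi{X}^{-1}\bcdot\mathsfbi{Y}$, so $\sigma_i(\mathsfbi{Z}) = \sigma_i(\mathsfbi{X}^{-1}\bcdot\mathsfbi{Y})$, which delivers the second equality in the distance formula.

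I expect the main obstacle to be the uniqueness assertion rather than the explicit formula. The formula follows mechanically from the Proposition once the congruence reduction is in place, but uniqueness requires two ingredients made fully explicit: that $\phi_{\mathsfbi{A}}$ is a genuine bijective isometry (so that two distinct minimizers joining $\mathsfbi{X}$ and $\mathsfbi{Y}$ would pull back to two distinct minimizers joining $\mathsfbi{I}$ and $\mathsfbi{Z}$), and that the commuting pair $(\mathsfbi{I},\mathsfbi{Z})$ admits only one geodesic. The latter rests on the Cartan--Hadamard (simply connected, geodesically complete, nonpositively curved) structure of $\Pos_3$ recorded in \S\ref{sec:geodesic}, reflected in the strictness of the exponential metric increasing property used to establish the Proposition. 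I would therefore state the bijectivity of $\phi_{\mathsfbi{A}}$ before invoking it to transfer both existence and uniqueness.
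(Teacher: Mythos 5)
Your proposal is correct and follows exactly the route the paper indicates: it reduces to the commuting pair $(\mathsfbi{I},\glaction{\mathsfbi{X}^{-\frac{1}{2}}}{\mathsfbi{Y}})$ via the affine-invariance Lemma and the preceding Proposition, which is precisely the strategy the text announces before the theorem (the paper itself only cites Chapter 6 of \citet{Bhatia2015} for the details). Your added care about the bijectivity of the congruence for transporting uniqueness, and about $|r|$ versus $r$ for negative $r$, are sensible refinements of the same argument rather than a different approach.
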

\begin{proof}
Chapter 6 in \citet{Bhatia2015}.
\end{proof} 

 \section{Details of the numerical evolution of the conformation tensor equations}
  
The conformation tensor field is solved on a temporal grid that is staggered by half a time step with respect to that of the velocity field. We use linear interpolation to transfer between the two temporal grids. The conformation tensor is time-marched using an equal sub-step second-order Runge--Kutta scheme. The advection and  stretching terms  employ an Adams-Bashforth discretization at each sub-step.  Following \citet{Dubief2005}, we ensure that the polymer does not exceed its maximum extensibility by  using a semi-implicit approach for the relaxation term in the conformation tensor.

The advection term in the conformation tensor is second-order accurate in space with the exception of a small number points where it is only first-order accurate: $\sim 0.1-0.3$\% in the $x$ and $y$ directions   and $\sim 3-5$\% in the $z$-direction. This change in order is due to the special treatment of the advection term in the   conformation tensor that is needed to avoid numerical issues which lead to the conformation tensor losing positive-definiteness. The special treatment is an adaptation of the slope-limiting  approach of \citet{Vaithianathan2006} (see also  \citet{Dallas2010} for an implementation). The approach of \citet{Vaithianathan2006} requires the evaluation of three schemes, utilizing forward, backward and centred stencils, to approximate the flux at the boundaries of each computational cell, at each time step. The scheme that maximizes the eigenvalues of the conformation tensor at the boundaries is then chosen. We modify this approach by evaluating the following schemes in order, and choosing the first one that yields a positive-definite tensor at the boundaries: (a) centred stencil (b) upwind biased stencil (c) downwind biased stencil and (d) first-order approximation that equates the conformation tensor at the cell-centre and the boundary. Our approach is mathematically consistent with \citet{Vaithianathan2006} but is computationally more efficient and defers to the unbiased approximation when possible. In practice, we find that case (a) is sufficient for the vast majority of the points in  the domain ($> 90$\%).  
  
The evolution equation for the conformation tensor has no associated boundary conditions since it is hyperbolic. The conformation tensor at the walls, where no derivatives of the conformation tensor are needed, is then explicitly marched in time.

 \section*{Acknowledgements}

The authors would like to acknowledge the Maryland Advanced Research Computing Centre (MARCC) for computing
time, S. J. Lee
and J. Lee for help in the validation of the numerical code, and the referees for their valuable comments.
This work is sponsored in part by the National Science Foundation under grants CBET--1652244, CBET--1511937 and PHY--1125915.
\bibliographystyle{jfm}
\bibliography{manuscript}

\end{document}